\documentclass[pra,aps,showpacs,onecolumn,twoside,superscriptaddress]{revtex4}



\usepackage{mathtools}
\usepackage{amsmath,amsfonts,amssymb,caption,color,epsfig,graphics,graphicx,hyperref,latexsym,mathrsfs,revsymb,theorem,url,verbatim,epstopdf,float}
\usepackage{tikz}
\usepackage[level]{datetime}
\usetikzlibrary{shapes, arrows}

\hypersetup{colorlinks,linkcolor={blue},citecolor={red},urlcolor={blue}}

\newtheorem{definition}{Definition}
\newtheorem{proposition}[definition]{Proposition}
\newtheorem{lemma}[definition]{Lemma}

\newtheorem{theorem}[definition]{Theorem}
\newtheorem{corollary}[definition]{Corollary}
\newtheorem{conjecture}[definition]{Conjecture}

\newtheorem{remark}[definition]{Remark}
\newtheorem{example}[definition]{Example}
\newtheorem{question}[definition]{Question}

\def\bcj{\begin{conjecture}}
\def\ecj{\end{conjecture}}
\def\bcr{\begin{corollary}}
\def\ecr{\end{corollary}}
\def\bd{\begin{definition}}
\def\ed{\end{definition}}
\def\bea{\begin{eqnarray}}
\def\eea{\end{eqnarray}}
\def\bem{\begin{enumerate}}
\def\eem{\end{enumerate}}
\def\bex{\begin{example}}
\def\eex{\end{example}}
\def\bim{\begin{itemize}}
\def\eim{\end{itemize}}
\def\bl{\begin{lemma}}
\def\el{\end{lemma}}
\def\bma{\begin{bmatrix}}
\def\ema{\end{bmatrix}}
\def\bpf{\begin{proof}}
\def\epf{\end{proof}}
\def\bpp{\begin{proposition}}
\def\epp{\end{proposition}}
\def\bqu{\begin{question}}
\def\equ{\end{question}}
\def\br{\begin{remark}}
\def\er{\end{remark}}
\def\bt{\begin{theorem}}
\def\et{\end{theorem}}


\def\squareforqed{\hbox{\rlap{$\sqcap$}$\sqcup$}}
\def\qed{\ifmmode\squareforqed\else{\unskip\nobreak\hfil
\penalty50\hskip1em\null\nobreak\hfil\squareforqed
\parfillskip=0pt\finalhyphendemerits=0\endgraf}\fi}
\def\endenv{\ifmmode\;\else{\unskip\nobreak\hfil
\penalty50\hskip1em\null\nobreak\hfil\;
\parfillskip=0pt\finalhyphendemerits=0\endgraf}\fi}
\newenvironment{proof}{\noindent \textbf{{Proof.~} }}{\qed}
\def\Dbar{\leavevmode\lower.6ex\hbox to 0pt
{\hskip-.23ex\accent"16\hss}D}
\makeatletter
\def\url@leostyle{%
  \@ifundefined{selectfont}{\def\UrlFont{\sf}}{\def\UrlFont{\small\ttfamily}}}
\makeatother
\urlstyle{leo}

\def\bcj{\begin{conjecture}}
\def\ecj{\end{conjecture}}
\def\bcr{\begin{corollary}}
\def\ecr{\end{corollary}}
\def\bd{\begin{definition}}
\def\ed{\end{definition}}
\def\bea{\begin{eqnarray}}
\def\eea{\end{eqnarray}}
\def\bem{\begin{enumerate}}
\def\eem{\end{enumerate}}
\def\bex{\begin{example}}
\def\eex{\end{example}}
\def\bim{\begin{itemize}}
\def\eim{\end{itemize}}
\def\bl{\begin{lemma}}
\def\el{\end{lemma}}
\def\bpf{\begin{proof}}
\def\epf{\end{proof}}
\def\bpp{\begin{proposition}}
\def\epp{\end{proposition}}
\def\bqu{\begin{question}}
\def\equ{\end{question}}
\def\br{\begin{remark}}
\def\er{\end{remark}}
\def\bt{\begin{theorem}}
\def\et{\end{theorem}}

\def\btb{\begin{tabular}}
\def\etb{\end{tabular}}

\newcommand{\nc}{\newcommand}


\def\b{\beta}
\def\g{\gamma}

 \nc{\bbA}{\mathbb{A}} \nc{\bbB}{\mathbb{B}} \nc{\bbC}{\mathbb{C}}
 \nc{\bbD}{\mathbb{D}} \nc{\bbE}{\mathbb{E}} \nc{\bbF}{\mathbb{F}}
 \nc{\bbG}{\mathbb{G}} \nc{\bbH}{\mathbb{H}} \nc{\bbI}{\mathbb{I}}
 \nc{\bbJ}{\mathbb{J}} \nc{\bbK}{\mathbb{K}} \nc{\bbL}{\mathbb{L}}
 \nc{\bbM}{\mathbb{M}} \nc{\bbN}{\mathbb{N}} \nc{\bbO}{\mathbb{O}}
 \nc{\bbP}{\mathbb{P}} \nc{\bbQ}{\mathbb{Q}} \nc{\bbR}{\mathbb{R}}
 \nc{\bbS}{\mathbb{S}} \nc{\bbT}{\mathbb{T}} \nc{\bbU}{\mathbb{U}}
 \nc{\bbV}{\mathbb{V}} \nc{\bbW}{\mathbb{W}} \nc{\bbX}{\mathbb{X}}
 \nc{\bbZ}{\mathbb{Z}}


 \nc{\bA}{{\bf A}} \nc{\bB}{{\bf B}} \nc{\bC}{{\bf C}}
 \nc{\bD}{{\bf D}} \nc{\bE}{{\bf E}} \nc{\bF}{{\bf F}}
 \nc{\bG}{{\bf G}} \nc{\bH}{{\bf H}} \nc{\bI}{{\bf I}}
 \nc{\bJ}{{\bf J}} \nc{\bK}{{\bf K}} \nc{\bL}{{\bf L}}
 \nc{\bM}{{\bf M}} \nc{\bN}{{\bf N}} \nc{\bO}{{\bf O}}
 \nc{\bP}{{\bf P}} \nc{\bQ}{{\bf Q}} \nc{\bR}{{\bf R}}
 \nc{\bS}{{\bf S}} \nc{\bT}{{\bf T}} \nc{\bU}{{\bf U}}
 \nc{\bV}{{\bf V}} \nc{\bW}{{\bf W}} \nc{\bX}{{\bf X}}
 \nc{\bZ}{{\bf Z}}


\nc{\cA}{{\cal A}} \nc{\cB}{{\cal B}} \nc{\cC}{{\cal C}}
\nc{\cD}{{\cal D}} \nc{\cE}{{\cal E}} \nc{\cF}{{\cal F}}
\nc{\cG}{{\cal G}} \nc{\cH}{{\cal H}} \nc{\cI}{{\cal I}}
\nc{\cJ}{{\cal J}} \nc{\cK}{{\cal K}} \nc{\cL}{{\cal L}}
\nc{\cM}{{\cal M}} \nc{\cN}{{\cal N}} \nc{\cO}{{\cal O}}
\nc{\cP}{{\cal P}} \nc{\cQ}{{\cal Q}} \nc{\cR}{{\cal R}}
\nc{\cS}{{\cal S}} \nc{\cT}{{\cal T}} \nc{\cU}{{\cal U}}
\nc{\cV}{{\cal V}} \nc{\cW}{{\cal W}} \nc{\cX}{{\cal X}}
\nc{\cZ}{{\cal Z}}


\nc{\hA}{{\hat{A}}} \nc{\hB}{{\hat{B}}} \nc{\hC}{{\hat{C}}}
\nc{\hD}{{\hat{D}}} \nc{\hE}{{\hat{E}}} \nc{\hF}{{\hat{F}}}
\nc{\hG}{{\hat{G}}} \nc{\hH}{{\hat{H}}} \nc{\hI}{{\hat{I}}}
\nc{\hJ}{{\hat{J}}} \nc{\hK}{{\hat{K}}} \nc{\hL}{{\hat{L}}}
\nc{\hM}{{\hat{M}}} \nc{\hN}{{\hat{N}}} \nc{\hO}{{\hat{O}}}
\nc{\hP}{{\hat{P}}} \nc{\hR}{{\hat{R}}} \nc{\hS}{{\hat{S}}}
\nc{\hT}{{\hat{T}}} \nc{\hU}{{\hat{U}}} \nc{\hV}{{\hat{V}}}
\nc{\hW}{{\hat{W}}} \nc{\hX}{{\hat{X}}} \nc{\hZ}{{\hat{Z}}}

\nc{\hn}{{\hat{n}}}


























\def\diag{\mathop{\rm diag}}



\def\max{\mathop{\rm max}}
\def\min{\mathop{\rm min}}





\def\tr{\mathop{\rm Tr}}



\def\ox{\otimes}

\newcommand{\bra}[1]{\langle#1|}
\newcommand{\ket}[1]{|#1\rangle}

\newcommand{\ketbra}[2]{|#1\rangle\!\langle#2|}


















\def\Dbar{\leavevmode\lower.6ex\hbox to 0pt
{\hskip-.23ex\accent"16\hss}D}
\usepackage{color}
\begin{document}
\title{Coherence and entanglement in Grover and
	Harrow-Hassidim-Lloyd algorithm}

\newdateformat{ukdate}{\ordinaldate{\THEDAY} \monthname[\THEMONTH] \THEYEAR}
\date{\ukdate\today}

\pacs{03.65.Ud, 03.67.Mn}

\author{Changchun Feng}
\affiliation{LMIB(Beihang University), Ministry of Education}
\affiliation{ School of Mathematical Sciences, Beihang University, Beijing 100191, China}

\author{Lin Chen}\email[]{linchen@buaa.edu.cn (corresponding author)}
\affiliation{LMIB(Beihang University), Ministry of Education}
\affiliation{ School of Mathematical Sciences, Beihang University, Beijing 100191, China}
\affiliation{International Research Institute for Multidisciplinary Science, Beihang University, Beijing 100191, China}
\author{Li-Jun Zhao}\email[]{zhaolijun@buaa.edu.cn (corresponding author)}
\affiliation{LMIB(Beihang University), Ministry of Education}
\affiliation{ School of Mathematical Sciences, Beihang University, Beijing 100191, China}

\begin{abstract}
Coherence, discord and geometric measure (GM) of entanglement are important tools for measuring physical resources. We compute them at every steps of the Grover's algorithm. We summarize these resources's patterns of change. These resources are getting smaller at the step oracle and are getting bigger or invariant
 at the step diffuser. This result is similar to the entanglement's pattern of change in Grover's algorithm. Furthermore, we compute GM at every steps of the Harrow-Hassidim-Lloyd algorithm.

\end{abstract}

\maketitle

\Large

\section{Introduction}
\label{sec:int}
	Quantum entanglement plays an important role as a physical resource in quantum information processing \cite{1935ES,2000MA,2007Horo}.  It is widely used in various quantum information processing tasks such as quantum computing \cite{2005Experimental}, teleportation \cite{2004Deterministic}, dense coding \cite{2002Quantum}, cryptography \cite{2020Entanglement} and  quantum key distribution \cite{Xu2020}. Quantum coherence constitutes a powerful resource for quantum metrology \cite{2004VG,2014RD} and entanglement creation \cite{2005JK,arXivAS} and is at the root of a number of intriguing phenomena of wide-ranging impact in quantum optics \cite{1963RJ,1991MO,1994AA,1995DF}, quantum information \cite{2000MA}, solid state physics \cite{2012CM}, and thermodynamics \cite{1978LH,2014LA}.
	The representatives of the quantum algorithm are Shor’ factoring \cite{1994PW} and Grover’s search \cite{1996LK} algorithms. A few years ago another  algorithm called Harrow Hassidim-Lloyd (HHL) algorithm was developed. It can compute the inverse of sparse matrix.
	The HHL algorithm is known to be optimal in the matrix inversion task. Grover algorithm is an unstructured search algorithm running on a quantum computer, and is one of the typical algorithms of quantum computing. 
	
	 Quantum entanglement is investigated in  Grover alogorithm or HHL algorithm \cite{2022MR}. In this paper we examine a question: ‘how the coherence, discord and GM change in Grover alogorithm or HHL algorithm?’. In order to explore this issue we firstly concentrate on  the Grover algorithm. We compute the coherence in Subsec. \ref{subsec:coh}. We compute  discord in every steps in Lemmas \ref{le:d_rho1}, \ref{le:d_rho2}, \ref{le:d_rho3} and \ref{le:d_rho4}, respectively. We compute  GM in every steps in Lemmas \ref{le:GM1}, \ref{le:GM2}, \ref{le:GM3} and \ref{le:GM4}, respectively. Then we show the tables of  coherence, discord and GM in Tables \ref{tab:coh}, \ref{tab:dis} and \ref{tab:GM}, respectively. We obtain that the variation trends of these physical quantities are getting smaller in the step oracle and getting bigger or invariant in the step diffuser. Furthermore, we concentrate on the HHL algorithm. We compute the GM in every steps of the HHL algorithm in Lemmas \ref{le:GM1_HHL}, \ref{le:GM2_HHL} and \ref{le:GM3_HHL}, respectively. 
	 
	 In addition, some geometrically motivated entanglement measures have been providing us with new insights into quantum entanglement, e.g. entanglement of formation \cite{PhysRevA.54.3824}, relative entropy of entanglement \cite{PhysRevLett.78.2275,PhysRevA.57.1619}, global robustness \cite{PhysRevA.68.012308,PhysRevA.59.141} and squashed entanglement \cite{2004CM}. Besides providing a simple geometric picture, they are closely related to some operationally motivated entanglement measures, e.g. entanglement of distillation \cite{PhysRevA.54.3824} and entanglement cost \cite{PatrickMHayden_2001}. In the future, we will investigate how these entanglement measures change in quantum algorithms.
	 
	The rest of this paper is organized as follows. In Sec. \ref{sec:pre} we introduce the preliminary facts, such as the definitions about coherence, discord, geometric measure of entanglement and lemmas about geometric measure. In Sec. \ref{sec:gro} we investigate the coherence, discord, geometric measure of the quantum states in the steps of Grover's Algorithm respectively. In Sec. \ref{sec:HHL} we investigate the geometric measure of the quantum states in three steps of HHL Algorithm respectively. Finally  we conclude in Sec. \ref{sec:con}.	
\section{Preliminaries}
\label{sec:pre}
	Quantum coherence is conventionally associated with the capability of a quantum state to exhibit quantum interference phenomena \cite{1995DF}. Frozen coherence is the distance between the quantum state $\rho$ and the incoherent state $\delta_{\rho}$, which is the closest irrelevant state of $\rho$. Then we have
	\begin{eqnarray}
		\mathcal{C}(\rho)=\mathcal{D}(\rho,\delta_{\rho})=\min_{\rho'}\mathcal{D}(\rho,\rho').
	\end{eqnarray} 
		Then we introduce the distance norm. Firstly we introduce the definition of the Frobenius norm \cite{BOTTCHER20081864}, 
	
	\begin{eqnarray}
		||A||_{F}=\sqrt{\sum_{i=1}^{m}\sum_{j=1}^{n}|a_{i,j}|^2}=\sqrt{\tr(A^\dagger A)}.
	\end{eqnarray}
	
	We  use the Frobenius norm as the distance norm as follows,
	\begin{eqnarray}
		\mathcal{D}(A,B)=||A-B||_{F}=\sqrt{\tr ((A-B)^\dagger (A-B))},
	\end{eqnarray}
	  where $A$ and $B$ are two matrices. Frobenius norm is used in the  detection of glottal closure instants \cite{1994MA}.

	Classical quantum discord revolves around information theory \cite{Bera_2018}. If we measure the lack of information by entropy, this definition of correlations is
	captured by the mutual information
	\begin{eqnarray}
	I(A:B)=S(A)+S(B)-S(AB),
	\end{eqnarray}
	 where $S(X)$ is the von Neumann entropy $S(X)=-\tr(\rho_X\log \rho_X)$ and $\rho_X$ is  a quantum state of system $X$. For classical variables, Bayes’ rule defines a conditional probability as $p_{x|y}=p_{xy}/p_y$. This implies an equivalent form for the classical mutual information
	 \begin{eqnarray}
	 	J_{cl}(B|A)=S(B)-S(B|A),
	 \end{eqnarray}	 
 	where the conditional entropy $S(B|A)=\sum_a p_aS(B|a)$ is the
	average of entropies $S(B|a)=-\sum_b p_{b|a}\log p_{b|a}$. The notion of classicality related to quantum discord
	revolves around information theory \cite{2000Zurek,2001Hend,oz01}.
	
	We introduce positive-operator-valued measure (POVM) on subsystem A. The measurement is described by a POVM with elements $E_a=M_a^\dagger M_a$, where $M_a$ is the measurement operator and $a$ is the classical outcome. Moreover we have  $\sum_a E_a=I$. The initial state $\rho_{AB}$ is transformed under the measurement to
	\begin{eqnarray}
		\rho_{AB} \rightarrow \rho_{AB}'=\sum_a (M_a \otimes I_B) \rho_{AB} (M_a \otimes I_B)^\dagger,
	\end{eqnarray}
	where party $A$ observes outcome a with probability 
	\begin{eqnarray}
	\label{eq:dis1}
	p_a=\tr ((E_a \otimes I) \rho_{AB}),
	\end{eqnarray}
	and B has the conditional state 
	\begin{eqnarray}
	\label{eq:dis2}
	\rho_{B|a}=\tr_A((E_a\otimes I) \rho_{AB})/p_a.
	\end{eqnarray}
    Then we define the conditional entropy $S(B|{E_a})=\sum_a p_aS(\rho_{B|a})$. Using Equations \eqref{eq:dis1},\eqref{eq:dis2}  we have
	\begin{equation}
		\label{eq：BEa}
		\begin{aligned}
				J(B|{E_a})&=S(B)-S(B|{E_a})=S(B)-\sum_a p_aS(\rho_{B|a})\\&=S(B)-\sum_a \tr (E_a \rho_{AB}) S({ \tr}_A(E_a \rho_{AB})/\tr (E_a \rho_{AB})). 
	\end{aligned}
	\end{equation}
	
	We will quantify the classical correlations of the state. Then independently of a measurement $J(B|{E_a})$ is maximized over all measurements,
	\begin{eqnarray}
		J(B|A)=\max_{E_a} J(B|{E_a}).
	\end{eqnarray}
	
	The quantum discord of the state $\rho_{AB}$ under the measurement $\{E_a\}$ is defined as a difference between total correlations.
	
	\begin{eqnarray}
		\label{eq:D}
		D(B|A)=I(A:B)-J(B|A)=\min_{\{E_a\}}\sum_a p_a S(\rho_{B|a})+S(A)-S(AB).
	\end{eqnarray}

	GM is closely related to the construction of optimal entanglement witnesses \cite{PhysRevA.68.042307} and discrimination of quantum states under LOCC \cite{PhysRevLett.96.040501,PhysRevA.77.012104,Markham_2007}. In condensed matter physics, GM is useful for studying quantum
	many-body systems, such as characterizing ground state properties and detecting phase
	transitions \cite{odv2008, 2008Orus2}.  We give the definition as follows \cite{2010Additivity}.

	\begin{definition}
		\label{de:GM}
		Suppose $\rho$ is an $N$-qubit state. GM measures the closest distance in terms of overlap between the state $\rho$ and the set of separable states, or, equivalently, the set of pure product states. Formally, GM is defined as 
		\begin{eqnarray}
			\label{eq:GM1111}
			\Lambda^2(\rho):=\max_{\sigma \in SEP} \tr(\rho \sigma)=\max_{\ket{\phi} \in PRO} \bra{\phi}\rho\ket{\phi},
		\end{eqnarray}
		\begin{eqnarray}
			\label{eq:GM2}
			G(\rho)=-2\log\Lambda(\rho).
		\end{eqnarray}
	Here, PRO denotes the set of fully pure product states in the Hilbert space $\otimes^N_{j=1} \mathcal{H}_j $. Any pure
	product state maximizing \eqref{eq:GM1111} is the closest product state of $\rho$.	
	\end{definition}

	Then we introduce two lemmas about GM\cite{2010Additivity}.
	\begin{lemma}
		\label{le:GM_cloest}
		In GM, the closest product state to any $N$-partite pure or mixed symmetric state with $N > 3$ is necessarily symmetric.
	\end{lemma}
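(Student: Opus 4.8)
The plan is to show that, for $N>3$, every closest product state of a symmetric $\rho$ has the form $\ket{\phi}^{\otimes N}$. Recall from Definition~\ref{de:GM} that a closest product state is a maximiser $\ket{\Phi}=\ket{\phi_1}\otimes\cdots\otimes\ket{\phi_N}$ of $\bra{\Phi}\rho\ket{\Phi}$; this is the right object also when $\rho$ is mixed, since the linear functional $\sigma\mapsto\tr(\rho\sigma)$ on the convex set $SEP$ attains its maximum at an extreme point, i.e.\ at a pure product state. Here ``symmetric'' means $\supp\rho$ lies in the fully symmetric subspace of $\cH^{\otimes N}$, so $\rho$ commutes with every permutation $\Pi_{\pi}$ of the tensor factors; hence whenever $\ket{\Phi}$ is optimal, so is $\ket{\phi_{\pi(1)}}\otimes\cdots\otimes\ket{\phi_{\pi(N)}}$ for every $\pi$. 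I would argue by contradiction, assuming some optimal $\ket{\Phi}$ has $\ket{\phi_1}$ not proportional to $\ket{\phi_2}$.

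The first step is a pairwise dichotomy: in an optimal product state any two single-particle factors are proportional or orthogonal. For a pair of sites, contract $\rho$ with the other $N-2$ factors to obtain $M:=(\bra{\phi_3}\otimes\cdots\otimes\bra{\phi_N})\,\rho\,(\ket{\phi_3}\otimes\cdots\otimes\ket{\phi_N})\ge0$ on $\cH_1\otimes\cH_2$. Since $\rho$ commutes with the transposition of sites $1$ and $2$, $M$ commutes with the swap, and optimality of $\ket{\Phi}$ forces $\ket{\phi_1}\otimes\ket{\phi_2}$ to maximise $\bra{ab}M\ket{ab}$ over all product vectors $\ket{a}\otimes\ket{b}$. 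Writing $M=M_{+}\oplus M_{-}$ along the symmetric/antisymmetric splitting of $\cH_1\otimes\cH_2$, and $\ket{ab}=\ket{s}+\ket{t}$ with $\ket{s}$ symmetric and $\ket{t}$ antisymmetric, one has $\bra{ab}M\ket{ab}=\bra{s}M_{+}\ket{s}+\bra{t}M_{-}\ket{t}$; a first- and second-order analysis inside $(\lin\{\ket{a},\ket{b}\})^{\otimes 2}$ (of dimension at most four) then shows that any stationary product vector with $0<|\braket{a}{b}|<1$ can be perturbed to strictly larger value. Applying this to every pair, the multiset $\{\ket{\phi_1},\dots,\ket{\phi_N}\}$ consists of pairwise orthonormal vectors $\ket{e_1},\ket{e_2},\dots$ with multiplicities $n_k\ge1$, $\sum_k n_k=N$, and by hypothesis at least two of the $\ket{e_k}$ occur.

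The second step uses $N>3$ to collapse these blocks to a single one. Choose three sites whose factors are not all equal and contract $\rho$ with the remaining $N-3$ factors to get $M'\ge0$ on three sites. Because $\rho$ commutes with all permutations of the $N$ factors, in particular with those fixing the retained sites, $M'$ is invariant under all of $S_3$, and the three retained factors maximise $\bra{abc}M'\ket{abc}$ over product vectors. Decomposing $M'$ into its $S_3$-isotypic sectors and using the orthogonality relations of Step~1 --- which make the relevant cross terms vanish --- one replaces an orthogonal factor by a copy of the majority direction and shows that $\bra{abc}M'\ket{abc}$, hence $\bra{\Phi}\rho\ket{\Phi}$, does not decrease, and strictly increases whenever more than one block is present; iterating the replacement exhibits a symmetric closest product state and, by the strict inequality, contradicts the optimality of a non-symmetric $\ket{\Phi}$.

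I expect the main obstacle to be this whole structure. The two-body problem genuinely admits orthogonal maximisers (take $M$ proportional to the antisymmetric projector), so ``proportional or orthogonal'' is sharp and needs the delicate low-dimensional second-order analysis, while the collapse step leans on that orthogonality to cancel cross terms --- which is precisely why enough spare sites are indispensable and why $N>3$ cannot be relaxed, the singlet $(\ket{01}-\ket{10})/\sqrt{2}$ being the prototypical failure at $N=2$. The mixed case needs nothing new: the argument used only $\rho\ge0$ and permutation symmetry, not purity (alternatively, decompose a symmetric $\rho$ over its eigenvectors, which lie in the symmetric subspace). Throughout I would follow the line of the cited additivity reference.
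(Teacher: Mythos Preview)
The paper does not prove this lemma; it is quoted from \cite{2010Additivity} as a preliminary fact and used without argument. So there is no ``paper's own proof'' against which to compare your attempt --- any correct proof you supply would already be more than what the paper offers.

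Regarding your sketch: the overall architecture (reduce to a two-site problem to force ``parallel or orthogonal'', then exploit the remaining sites to eliminate the orthogonal case) is indeed the route taken in the reference you invoke at the end. A couple of points deserve care. First, your Step~1 claim that every stationary product vector with $0<|\braket{a}{b}|<1$ can be strictly improved is the heart of the matter and cannot be waved off as a routine ``first- and second-order analysis'': a swap-invariant two-body $M$ can have degenerate product maxima, and one must actually show that the only product maximisers are parallel or orthogonal, not merely that non-extremal stationary points can be perturbed. Second, in Step~2 your replacement argument (``does not decrease, and strictly increases'') needs more than $S_3$-invariance of $M'$; you still have to rule out that a genuinely orthogonal configuration is itself a maximiser, which in the standard proof uses the structure of the symmetric subspace more directly than a bare isotypic decomposition. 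Finally, note that the lemma as stated with $N>3$ is weaker than the result in the cited reference, which already holds for $N\ge3$ --- and the paper in fact applies it to three-qubit states --- so you should not lean on having ``enough spare sites''; the $N=2$ singlet is the only genuine obstruction.
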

	A density matrix is called non-negative if all its entries in the computational basis are non-negative.
	\begin{lemma}
		\label{le:GM_non}
		In GM, the closest product state to a non-negative state can be chosen to be non-negative.
	\end{lemma}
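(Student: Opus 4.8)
The plan is to start from an arbitrary closest product state and replace every amplitude by its absolute value, then show that this substitution cannot decrease the overlap with $\rho$. Write a pure product state as $\ket{\phi}=\bigotimes_{j=1}^{N}\ket{\phi_j}$ with single-party amplitudes $\ket{\phi_j}=\sum_{k}c^{(j)}_k\ket{k}$, so that the amplitude of $\ket{\phi}$ on the computational basis vector labelled by the multi-index $\vec n=(n_1,\dots,n_N)$ is $c_{\vec n}=\prod_{j=1}^{N}c^{(j)}_{n_j}$. Expanding the objective of \eqref{eq:GM1111} in the computational basis gives
\begin{equation}
\bra{\phi}\rho\ket{\phi}=\sum_{\vec m,\vec n}\overline{c_{\vec m}}\,\rho_{\vec m,\vec n}\,c_{\vec n},
\end{equation}
where $\rho_{\vec m,\vec n}$ are the entries of $\rho$ in that basis.

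Next I would introduce the ``modulus'' state $\ket{\phi'}=\bigotimes_{j=1}^{N}\ket{\phi'_j}$ with $\ket{\phi'_j}=\sum_{k}\abs{c^{(j)}_k}\ket{k}$. Taking absolute values factor by factor leaves each single-party norm unchanged, so $\ket{\phi'}$ is again a normalized pure product state, and its global amplitudes are $\abs{c_{\vec n}}=\prod_{j}\abs{c^{(j)}_{n_j}}\ge 0$, i.e.\ $\ket{\phi'}$ is non-negative. Using the hypothesis $\rho_{\vec m,\vec n}\ge 0$ and the triangle inequality,
\begin{equation}
\bra{\phi'}\rho\ket{\phi'}=\sum_{\vec m,\vec n}\abs{c_{\vec m}}\,\rho_{\vec m,\vec n}\,\abs{c_{\vec n}}\ \ge\ \Bigl|\sum_{\vec m,\vec n}\overline{c_{\vec m}}\,\rho_{\vec m,\vec n}\,c_{\vec n}\Bigr|=\bigl|\bra{\phi}\rho\ket{\phi}\bigr|.
\end{equation}
Because $\rho$ is a density matrix it is positive semidefinite, so $\bra{\phi}\rho\ket{\phi}$ is real and non-negative and the right-hand side equals $\bra{\phi}\rho\ket{\phi}$; hence $\bra{\phi'}\rho\ket{\phi'}\ge\bra{\phi}\rho\ket{\phi}$.

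Finally, I would apply this to a product state $\ket{\phi}$ attaining $\Lambda^2(\rho)$ in Definition \ref{de:GM}. Then $\bra{\phi'}\rho\ket{\phi'}\ge\Lambda^2(\rho)$, and since $\Lambda^2(\rho)$ is the maximum the inequality is an equality, so $\ket{\phi'}$ is also a closest product state and it is non-negative by construction, which is the claim. This argument is elementary; the only points needing a little care are the bookkeeping of the tensor-product amplitudes and the remark that passing to moduli preserves the norm within each tensor factor, so I do not anticipate a real obstacle — the sign condition on the entries of $\rho$ together with the positivity of $\rho$ are exactly what make the inequality go through.
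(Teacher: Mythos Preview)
Your proof is correct and complete. The paper itself does not supply a proof of this lemma; it is quoted without proof as a known result from the cited reference on the geometric measure, so there is no in-paper argument to compare against. For what it is worth, your approach --- passing to the modulus of each single-party amplitude and invoking the triangle inequality against the non-negative entries of $\rho$ --- is exactly the standard proof of this fact.
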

	These two lemmas are useful for finding the closest product state and computing GM in every steps of the Grover algorithm or HHL algorithm.
\section{Grover's Algorithm}
\label{sec:gro}
	Let us consider a set $\mathcal{S}=\{\ket{j}|j=0,1,...,N-1\}$, $\langle j_1\ket{j_2}=\delta_{j_1j_2}$. Grover’s algorithm \cite{1996LK,1997LK} tries to find a particular quantum state $\ket{\psi_G}\in \mathcal{S}$. Grover’s algorithm is made up of three unitary transformations called superposition, oracle and diffuser, respectively. The superposition transforms the initial state $\ket{0}^{\ox n}$ to a superposed state, where all states in $\mathcal{S}$ are superposed with equal probability amplitude. This can be achieved by making use of the Hadamard gate,
	\begin{equation}
		H=\frac{1}{\sqrt{2}}\bma 1 & 1\\
							1 & -1\ema. 
	\end{equation}
	Therefore, after superposition transformation the initial state is changed into
	\begin{equation}
		\ket{s}=H^{\ox n}\ket{0}^{\ox n}= \frac{1}{\sqrt{N}}\sum_{j=0}^{N-1}\ket{j},
	\end{equation}
	where $N=2^n$. The oracle and diffuser are described by the unitary operators $U_{oracle}=I-2\ket{\psi_G} \bra{\psi_G}$ and $U_{diffuser}=2\ket{s}\bra{s}-I$. The oracle changes a sign of $\ket{\psi_G}$ in $\ket{s}$. The diffuser increases the probability amplitude of $\ket{\psi_G}$ from $U_{oracle}\ket{s}$.
	
	Even though Grover’s algorithm is optimal as a quantum searching algorithm \cite{1999CZ}, such
	maximal creation and complete annihilation of entanglement do not occur for large $N$. For example, let us consider the case $N = 8$. If $\psi_{G}=\ket{7}=\ket{111}$, Grover’s algorithm changes the quantum state as

	\begin{eqnarray}
		\label{eq:rho1}
		\ket{\psi_1}=U_{oracle}\ket{s}=\frac{1}{2\sqrt{2}}(\ket{0}+\ket{1}+\ket{2}+\ket{3}+\ket{4}+\ket{5}+\ket{6}-\ket{7}),
	\end{eqnarray}
	\begin{eqnarray}
		\label{eq:rho2}
		\ket{\psi_2}=U_{diffuser}\ket{\psi_1}=\frac{1}{4\sqrt{2}}(\ket{0}+\ket{1}+\ket{2}+\ket{3}+\ket{4}+\ket{5}+\ket{6}+5\ket{7}),
	\end{eqnarray}
	\begin{eqnarray}
		\label{eq:rho3}
		\ket{\psi_3}=U_{oracle}\ket{\psi_2}=\frac{1}{4\sqrt{2}}(\ket{0}+\ket{1}+\ket{2}+\ket{3}+\ket{4}+\ket{5}+\ket{6}-5\ket{7}),
	\end{eqnarray}
	\begin{eqnarray}
		\label{eq:rho4}
		\ket{\psi_4}=U_{diffuser}\ket{\psi_3}=\frac{1}{-8\sqrt{2}}(\ket{0}+\ket{1}+\ket{2}+\ket{3}+\ket{4}+\ket{5}+\ket{6}-11\ket{7}).
	\end{eqnarray}

\subsection{Coherence by the Frobenius norm}
\label{subsec:coh}
In this subsection, we will use the Frobenius norm to calculate the coherence.
Suppose $\delta_{\rho_1}$ is the  closest irrelevant state of $\rho_1$. Then we have $\delta_{\psi_1}=\diag(x_1,x_2,x_3,x_4,x_5,x_6,x_7,x_8)$ and $\sum_{i=1}^{8}x_i=1$, $x_i\geq 0$.

\begin{equation}
	\begin{aligned}
		\mathcal{C}_F(\rho_1)&=\mathcal{D}_{F}(\rho_1,\delta_{\rho_1})=||\rho_1-\delta_{\rho_1}||_{F},
	\end{aligned}
\end{equation}
where
\begin{equation}
	\begin{aligned}
		\rho_1-\delta_{\rho_1}&= \frac{1}{8}\bma 
		1-8x_1&1&1&1&1&1&1&-1\\
		1&1-8x_2&1&1&1&1&1&-1\\
		1&1&1-8x_3&1&1&1&1&-1\\
		1&1&1&1-8x_4&1&1&1&-1\\
		1&1&1&1&1-8x_5&1&1&-1\\
		1&1&1&1&1&1-8x_6&1&-1\\
		1&1&1&1&1&1&1-8x_7&-1\\
		-1&-1&-1&-1&-1&-1&-1&1-8x_8\ema .
	\end{aligned}
\end{equation}

Then we obtain that $\mathcal{C}_F(\rho_1)=\min_{x_i}  \sqrt{\frac{1}{64}\sum_{i=1}^{8}(1-8x_i)^2+\frac{7}{8}}$. We have  $\sum_{i=1}^{8}(1-x_i)^2\geq \frac{(8-8\sum_{i=1}^{8}x_i)^2}{8}=0$. The equation is equal when $1-8x_1=1-8x_2=\cdots=1-8x_8=0$. So we get that $\mathcal{C}_F(\rho_1)=\sqrt{\frac{7}{8}}=\frac{\sqrt{14}}{4}$.

Similarly, we have that

\begin{equation}
	\begin{aligned}
		&\mathcal{C}_F(\rho_2)=\mathcal{D}_{F}(\rho_2,\delta_{\rho_2})=||\rho_2-\delta_{\rho_2}||_{F},
	\end{aligned}
\end{equation}
where
\begin{equation}
	\begin{aligned}
&\rho_2-\delta_{\rho_2}\\=& \frac{1}{32}\bma 
1-32x_1&1&1&1&1&1&1&5\\
1&1-32x_2&1&1&1&1&1&5\\
1&1&1-32x_3&1&1&1&1&5\\
1&1&1&1-32x_4&1&1&1&5\\
1&1&1&1&1-32x_5&1&1&5\\
1&1&1&1&1&1-32x_6&1&5\\
1&1&1&1&1&1&1-32x_7&5\\
5&5&5&5&5&5&5&25-32x_8\ema.
\end{aligned}
\end{equation}
$\mathcal{C}_F(\rho_2)=\min_{x_i} \sqrt{\frac{1}{1024}(\sum_{i=1}^{7}(1-32x_i)^2+(25-32x_8)^2)+\frac{49}{128}}=\frac{7\sqrt{2}}{16}$.  When $x_i=\frac{1}{32}$ for $i \in [1,7]$, $x_8=\frac{25}{32}$, the expression gets the minimum value.
	
For $\rho_3$, we have that
\begin{equation}
	\mathcal{C}_F(\rho_3)=\mathcal{D}_{F}(\rho_3,\delta_{\rho_3})=||\rho_3-\delta_{\rho_3}||_{F},
\end{equation}
where
\begin{equation}
	\begin{aligned}
	&\rho_3-\delta_{\rho_3}\\=&	\frac{1}{128}\bma 
		1-32x_1&1&1&1&1&1&1&-5\\
		1&1-32x_2&1&1&1&1&1&-5\\
		1&1&1-32x_3&1&1&1&1&-5\\
		1&1&1&1-32x_4&1&1&1&-5\\
		1&1&1&1&1-32x_5&1&1&-5\\
		1&1&1&1&1&1-32x_6&1&-5\\
		1&1&1&1&1&1&1-32x_7&-5\\
		-5&-5&-5&-5&-5&-5&-5&25-32x_8\ema.
	\end{aligned}
\end{equation}
$\mathcal{C}_F(\rho_3)=\min_{x_i} \sqrt{\frac{1}{1024}(\sum_{i=1}^{7}(1-32x_i)^2+(25-32x_8)^2)+\frac{49}{128}}=\frac{7\sqrt{2}}{16}$. When $x_i=\frac{1}{32}$ for $i \in [1,7]$, $x_8=\frac{25}{32}$, the expression gets the minimum value.

For $\rho_4$, we have that
\begin{equation}
	\mathcal{C}_F(\rho_4)=\mathcal{D}_{F}(\rho_4,\delta_{\rho_4})=||\rho_4-\delta_{\rho_4}||_{F},
\end{equation}
where
{\large
\begin{equation}
	\begin{aligned}
		&\rho_4-\delta_{\rho_4}\\=&\frac{1}{128}\bma 
		1-128x_1&1&1&1&1&1&1&-11\\
		1&1-128x_2&1&1&1&1&1&-11\\
		1&1&1-128x_3&1&1&1&1&-11\\
		1&1&1&1-128x_4&1&1&1&-11\\
		1&1&1&1&1-128x_5&1&1&-11\\
		1&1&1&1&1&1-128x_6&1&-11\\
		1&1&1&1&1&1&1-128x_7&-11\\
		-11&-11&-11&-11&-11&-11&-11&121-128x_8\ema.
	\end{aligned}
\end{equation}}

$\mathcal{C}_F(\rho_3)=\min_{x_i} \sqrt{\frac{1}{128^2}(\sum_{i=1}^{7}(1-128x_i)^2+(121-128x_8)^2)+\frac{1736}{128^2}}=\frac{\sqrt{434}}{64}$. When $x_i=\frac{1}{128}$ for $i \in [1,7]$, $x_8=\frac{121}{128}$, the expression gets the minimum value.

We make a table to show the change of the coherence in every step in Grover’s algorithm.
\begin{table}[htb]   
	\begin{center}   
		\caption{The coherence of every step in Grover's algorithm}  
		\label{tab:coh} 
		\begin{tabular}{|c|c|c|}    
			\hline   \textbf{Quantum state} & \textbf{Coherence} & \textbf{Value} (keep two demical places) \\   
			\hline   $\ket{\psi_{1}}$ & $\frac{\sqrt{14}}{4}$ &  $0.95$\\
			\hline   $\ket{\psi_{2}}$ & $\frac{7\sqrt{2}}{16}$ & $0.62$\\  
			\hline   $\ket{\psi_{3}}$ & $\frac{7\sqrt{2}}{16}$ & $0.62$\\
			\hline   $\ket{\psi_{4}}$ & $\frac{\sqrt{434}}{64}$ & $0.33$\\ 
			\hline   
		\end{tabular}   
	\end{center}   
\end{table}

In this table, wo conclude that the coherence of state in Grover's algorithm is getting smaller in the step oracle.

We have known that in  Grover’s algorithm, the three-tangle and  concurrences of the  mixed states is also reduced in the step oracle.

\subsection{Discord}

	In this subsection, we will investigate the quantum discord in the process of Grover's Algorithm.
	Firstly, we investigate the discord of the quantum state $\rho_1=\ket{\psi_{1}}\bra{\psi_{1}}$ in \eqref{eq:rho1}. We have the following observation.
	\begin{lemma}
		\label{le:d_rho1}
		
		Suppose $\ket{\psi_1}=U_{oracle}\ket{s}=\frac{1}{2\sqrt{2}}(\ket{0}+\ket{1}+\ket{2}+\ket{3}+\ket{4}+\ket{5}+\ket{6}-\ket{7})$. For $\rho_1=\ket{\psi_1}\bra{\psi_1}$, we denote $A_1,B_1$ and $C_1$ as three subsystems of $\rho_1$. Then the discord
		\begin{eqnarray}
			\label{eq:d_rho1}
			&D(B_1C_1|A_1)=D(A_1C_1|B_1)=D(A_1B_1|C_1)\\
			&=D(C_1|A_1B_1)=D(B_1|A_1C_1)=D(A_1|B_1C_1)\\
			&=-\frac{1}{4}\log{\frac{1}{4}}-\frac{3}{4}\log{\frac{3}{4}}\\&\approx 0.81.
		\end{eqnarray} 
	\end{lemma}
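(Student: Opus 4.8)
The plan is to exploit that $\rho_1=\ket{\psi_1}\bra{\psi_1}$ is a \emph{pure} three-qubit state, which collapses the quantum discord \eqref{eq:D} into a difference of von Neumann entropies. Starting from the formula $D(B|A)=\min_{\{E_a\}}\sum_a p_a S(\rho_{B|a})+S(A)-S(AB)$ recorded in the preliminaries, I would first observe that for any pure bipartite state the minimized term vanishes: it is nonnegative, and measuring the conditioning party in a rank-one (e.g. computational) basis sends the post-measurement global state to a product vector, so each conditional state $\rho_{B|a}$ is pure and $S(\rho_{B|a})=0$. Hence for every split of the pure state $\rho_1$ one has $D(B|A)=S(A)-S(AB)$. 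Applying this to the six quantities: $D(B_1C_1|A_1)=S(A_1)-S(A_1B_1C_1)=S(A_1)$, and likewise $D(A_1C_1|B_1)=S(B_1)$ and $D(A_1B_1|C_1)=S(C_1)$; while $D(C_1|A_1B_1)=S(A_1B_1)-S(A_1B_1C_1)=S(A_1B_1)$, and since $\rho_{A_1B_1}$ and $\rho_{C_1}$ are complementary reductions of a pure state they carry the same entropy, so this equals $S(C_1)$; similarly $D(B_1|A_1C_1)=S(B_1)$ and $D(A_1|B_1C_1)=S(A_1)$. Thus all six discords reduce to the single-qubit entropies $S(A_1),S(B_1),S(C_1)$.

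It remains to compute those. I would note that $\ket{\psi_1}=\ket{+}^{\ox 3}-\tfrac{1}{\sqrt2}\ket{111}$ is invariant under permutations of the three qubits, so $\rho_{A_1}=\rho_{B_1}=\rho_{C_1}$ and it suffices to diagonalize one of them. Grouping $\ket{\psi_1}$ by its first qubit gives $\ket{\psi_1}=\ket{0}\ket{\mu}+\ket{1}\ket{\nu}$ with $\ket{\mu}=\tfrac{1}{2\sqrt2}(\ket{00}+\ket{01}+\ket{10}+\ket{11})$ and $\ket{\nu}=\tfrac{1}{2\sqrt2}(\ket{00}+\ket{01}+\ket{10}-\ket{11})$, so the partial trace over the last two qubits has the Gram-matrix form $\rho_{A_1}=\bma \langle\mu|\mu\rangle & \langle\nu|\mu\rangle \\ \langle\mu|\nu\rangle & \langle\nu|\nu\rangle\ema=\tfrac14\bma 2 & 1 \\ 1 & 2\ema$, with eigenvalues $\tfrac14$ and $\tfrac34$. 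Therefore $S(A_1)=S(B_1)=S(C_1)=-\tfrac14\log\tfrac14-\tfrac34\log\tfrac34\approx0.81$, and by the previous paragraph each of the six discords equals this value, which is exactly \eqref{eq:d_rho1}.

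The only place that needs genuine care — the main obstacle — is the reduction in the first paragraph, namely that for a pure global state the optimal measurement defining $J(B|A)$ drives $\sum_a p_aS(\rho_{B|a})$ to zero. The $\ge 0$ direction is immediate; for the matching bound one exhibits a single rank-one product-basis measurement, and if one prefers to restrict the optimization to such measurements from the outset, the standard argument is that refining a POVM cannot increase the average conditional entropy (concavity of the von Neumann entropy). Everything after that is the elementary computation above, with the permutation symmetry of $\ket{\psi_1}$ doing the remaining work of forcing $S(A_1)=S(B_1)=S(C_1)$ and thereby collapsing the six discords to one number.
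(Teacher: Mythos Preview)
Your proposal is correct and reaches the same conclusion as the paper, but the route differs in emphasis. The paper argues case-by-case: it writes out $\rho_1$, $\rho_{A_1}$, $\rho_{B_1C_1}$ as explicit matrices, reads off the eigenvalues $\tfrac14,\tfrac34$, and then exhibits concrete diagonal POVM elements $E_{a_i}$ on the full $8$-dimensional space to witness $\sum_a p_a S(\rho_{BC|a})=0$, repeating the construction for the $(C_1|A_1B_1)$ split and declaring the remaining four ``similarly''. Your argument is more conceptual: you invoke once and for all that any rank-one measurement on the conditioning party of a \emph{pure} bipartite state yields pure conditional states, so the minimized term in \eqref{eq:D} vanishes for every split; then you use purity ($S(A_1B_1C_1)=0$), the Schmidt relation $S(A_1B_1)=S(C_1)$, and the permutation symmetry of $\ket{\psi_1}$ to collapse all six discords to a single marginal entropy, computed via a Gram matrix. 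This buys you a cleaner and shorter argument that handles all six cases uniformly; the paper's explicit POVMs are in fact awkward (the three-outcome choice $E_{a_1},E_{a_2},E_{a_3}$ displayed there is not of the local form $M_A\otimes I_{BC}$, though the two-outcome computational basis on $A_1$ alone would have sufficed), so your abstract reduction is also more careful on exactly the point you flagged as the main obstacle.
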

	\begin{proof}
		
	For $\rho_1=\ket{\psi_1}\bra{\psi_1}$. Then we get the following expression by equation \eqref{eq:D} 
	\begin{eqnarray}
		\label{eq:psi1}
		D(B_1C_1|A_1)=\min_{\{E_a\}}\sum_a p_a S(B_1C_1|a)+S(A_1)-S(\rho_1).
	\end{eqnarray}
	\begin{eqnarray}
	\label{eq:psi11}
	D(C_1|A_1B_1)=\min_{\{E_a\}}\sum_a p_a S(C_1|a)+S(A_1B_1)-S(\rho_1).
	\end{eqnarray}
	 We obtain that  $\rho_1=\frac{1}{8}
	 \bma 1&1&1&1&1&1&1&-1\\
	 1&1&1&1&1&1&1&-1\\
	 1&1&1&1&1&1&1&-1\\
	 1&1&1&1&1&1&1&-1\\
	 1&1&1&1&1&1&1&-1\\
	 1&1&1&1&1&1&1&-1\\
	 1&1&1&1&1&1&1&-1\\
	 -1&-1&-1&-1&-1&-1&-1&1\ema\\$,

	  $\rho_{A_1}=\rho_{B_1}=\rho_{C_1}=\frac{1}{8}(3(\ket{0}+\ket{1})(\bra{0}+\bra{1})+(\ket{0}-\ket{1})(\bra{0}-\bra{1}))=\frac{1}{8}\bma 4&2\\
	2&4\ema\\$. 
	
	$\rho_{B_1C_1}=\rho_{A_1B_1}=\rho_{A_1C_1}=\frac{1}{8}((\ket{00}+\ket{01}+\ket{10}-\ket{11})(\bra{00}+\bra{01}+\bra{10}-\bra{11})+(\ket{00}+\ket{01}+\ket{10}+\ket{11})(\bra{00}+\bra{01}+\bra{10}+\bra{11}))=
	\frac{1}{8}\bma 2&2&2&0\\
	2&2&2&0\\
	2&2&2&0\\
	0&0&0&2\ema$.
	
	We get that two eigenvalues of $\rho_{A_1}$ are $\frac{1}{4},\frac{3}{4}$. Four eigenvalues of $\rho_{B_1C_1}$ are $\frac{1}{4},\frac{3}{4},0,0$. So the von Neumann entropy $S(\rho_{A_1})=-\frac{1}{4}\log{\frac{1}{4}}-\frac{3}{4}\log{\frac{3}{4}}=0.81$.
	The von Neumann entropy $S(\rho_{B_1C_1})=-\frac{1}{4}\log{\frac{1}{4}}-\frac{3}{4}\log{\frac{3}{4}}$.
	 
	 For $\min_{\{E_a\}}\sum_a p_a S(B_1C_1|a)$ in \eqref{eq:psi1}, we suppose $\{E_a\}$ is made of $E_{a_1}$, $E_{a_2}$ and $E_{a_3}$. $E_{a_1}$, $E_{a_2}$ and $E_{a_3}$ have the following expression,
	 \begin{equation}
	 E_{a_1}=\diag(1,1,1,1,0,0,0,0),
	 \end{equation}
	 \begin{equation}
	 E_{a_2}=
	 \diag(0,0,0,0,1,1,1,0),
	\end{equation}
	 \begin{equation}
	E_{a_3}=
	 \diag(0,0,0,0,0,0,0,1).
	\end{equation}
	 In this case, we can get the following result by \eqref{eq：BEa}.
	 
	 \begin{eqnarray}	 	
	 	\sum_a p_a S(B_1C_1|a)=\sum_{a_i} \tr (E_{a_i} \rho_1) S(\tr_A(E_{a_i} \rho_1)/\tr (E_{a_i} \rho_1))=0	 	 
	 \end{eqnarray}

	By \eqref{eq:psi1} we get that 	
	\begin{equation}
		\begin{split}
				D(B_1C_1|A_1)&=\min_{\{E_a\}}\sum_a p_aS(B_1C_1|a)+S(A_1)-S(\rho_1)\\
				&=S(A_1)=-\frac{1}{4}\log{\frac{1}{4}}-\frac{3}{4}\log{\frac{3}{4}}\\
				&\approx 0.81.
		\end{split}	
	\end{equation}
	 For $\min_{\{E_a\}}\sum_a p_a S(C_1|a)$ in \eqref{eq:psi1}, we suppose $\{E_a\}$ is made of $E_{a_1'}$ and $E_{a_2'}$. $E_{a_1'}$, $E_{a_2'}$ and $E_{a_3'}$ have the following expression,
	\begin{equation}
		E_{a_1'}=\diag(1,1,1,1,1,1,0,0),
	\end{equation}
	\begin{equation}
		E_{a_2'}=
		\diag(0,0,0,0,0,0,1,0),
	\end{equation}
	\begin{equation}
		E_{a_3'}=
		\diag(0,0,0,0,0,0,0,1).
	\end{equation}
	In this case, we can get the following result by \eqref{eq：BEa}.
	
	\begin{eqnarray}	 	
		\sum_a p_a S(C_1|a)=\sum_{a_i'} \tr (E_{a_i'} \rho_1) S(\tr_{AB}(E_{a_i'} \rho_1)/\tr (E_{a_i'} \rho_1))=0	 	 
	\end{eqnarray}
	
	By \eqref{eq:psi1} we get that 	
	\begin{equation}
		\begin{split}
			D(C_1|A_1B_1)&=\min_{\{E_a\}}\sum_a p_aS(C_1|a)+S(A_1B_1)-S(\rho_1)\\
			&=S(A_1B_1)=-\frac{1}{4}\log{\frac{1}{4}}-\frac{3}{4}\log{\frac{3}{4}}\\
			&\approx 0.81.
		\end{split}	
	\end{equation}
	Then we calculate $D(A_1C_1|B_1)=D(A_1B_1|C_1)=-\frac{1}{4}\log{\frac{1}{4}}-\frac{3}{4}\log{\frac{3}{4}}$ in the same way as calculating $D(A_1B_1|C_1)$, and  calculate $D(B_1|A_1C_1)=D(A_1|B_1C_1)=-\frac{1}{4}\log{\frac{1}{4}}-\frac{3}{4}\log{\frac{3}{4}}$ in the same way as calculating $D(C_1|A_1B_1)$.
	\end{proof}
	Then we consider the discords of three quantum states $\rho_2=\ket{\psi_2}\bra{\psi_2}$, $\rho_3=\ket{\psi_3}\bra{\psi_3}$ and $\rho_4=\ket{\psi_4}\bra{\psi_4}$ in the Grover’s algorithm's unitary transforms.
	\begin{lemma}
		\label{le:d_rho2}		
		Suppose 	$\ket{\psi_2}=U_{diffuser}\ket{\psi_1}=\frac{1}{4\sqrt{2}}(\ket{0}+\ket{1}+\ket{2}+\ket{3}+\ket{4}+\ket{5}+\ket{6}+5\ket{7})$. For $\rho_2=\ket{\psi_2}\bra{\psi_2}$, we denote $A_2,B_2$ and $C_2$ as three subsystems of $\rho_2$. Then the discord 
		\begin{equation}
			\label{eq:d_rho2}
			\begin{aligned}
					&D(B_2C_2|A_2)=D(A_2C_2|B_2)=D(A_2B_2|C_2)\\
				&=D(C_2|A_2B_2)=D(B_2|A_2C_2)=D(A_2|B_2C_2)\\
				&=-\frac{1}{8}(4+\sqrt{13}))\log{\frac{1}{8}(4+\sqrt{13})}-\frac{1}{8}(4-\sqrt{13})\log{\frac{1}{8}(4-\sqrt{13})}\\&\approx 0.28.
			\end{aligned}
		\end{equation}
	\end{lemma}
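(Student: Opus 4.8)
The plan is to follow the template of the proof of Lemma~\ref{le:d_rho1}: use that $\rho_2$ is pure to reduce every discord in \eqref{eq:d_rho2} to a single marginal entropy, and then exhibit an explicit measurement that saturates the minimum in \eqref{eq:D}.

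First I would record that $\ket{\psi_2}=\frac1{4\sqrt2}\big((\ket0+\ket1)^{\ox 3}+4\ket{111}\big)$ is invariant under every permutation of the three qubits, so that $\rho_{A_2}=\rho_{B_2}=\rho_{C_2}$ and $\rho_{A_2B_2}=\rho_{A_2C_2}=\rho_{B_2C_2}$; and that, since $\rho_2$ is pure, $S(\rho_2)=0$ and $S(\rho_{A_2})=S(\rho_{B_2C_2})$, $S(\rho_{B_2})=S(\rho_{A_2C_2})$, $S(\rho_{C_2})=S(\rho_{A_2B_2})$. Hence all six marginal entropies appearing in \eqref{eq:D} collapse to one number $s$, and it suffices to compute it once. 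Tracing out $C_2$ gives $\rho_{A_2B_2}=\frac1{32}\bma 2&2&2&6\\2&2&2&6\\2&2&2&6\\6&6&6&26\ema$, whose range is spanned by $\ket{00}+\ket{01}+\ket{10}$ and $\ket{11}$; restricting to this plane, the two nonzero eigenvalues satisfy $\lambda^2-\lambda+\frac3{64}=0$, i.e. $\lambda_\pm=\frac18(4\pm\sqrt{13})$. The same quadratic governs the single-qubit marginal $\rho_{C_2}=\frac1{32}\bma 4&8\\8&28\ema$, as it must by purity. Therefore $s=-\frac18(4+\sqrt{13})\log\frac18(4+\sqrt{13})-\frac18(4-\sqrt{13})\log\frac18(4-\sqrt{13})\approx0.28$.

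Next I would saturate the two kinds of minima. For $D(B_2C_2|A_2)$, take the rank-one projective measurement $E_{a_1}=\ket0\bra0_{A_2}\ox I$, $E_{a_2}=\ket1\bra1_{A_2}\ox I$; a rank-one projection applied to the pure state $\ket{\psi_2}$ leaves $B_2C_2$ in a pure state, so $\sum_a p_a S(\rho_{B_2C_2|a})=0$. Since $\sum_a p_a S(\rho_{B_2C_2|a})\geq0$ for every POVM, the minimum in \eqref{eq:D} is $0$, whence $D(B_2C_2|A_2)=S(\rho_{A_2})-S(\rho_2)=s$. For $D(C_2|A_2B_2)$, take instead the computational-basis measurement $\{\ket{jk}\bra{jk}_{A_2B_2}\ox I_{C_2}\}_{j,k\in\{0,1\}}$ on the composite system $A_2B_2$; again each conditional state $\rho_{C_2|jk}$ is pure, the minimum is $0$, and $D(C_2|A_2B_2)=S(\rho_{A_2B_2})=s$. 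Finally, the permutation symmetry of $\ket{\psi_2}$ transports these two identities to all six quantities in \eqref{eq:d_rho2}.

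The only genuinely computational step is diagonalizing the two-qubit marginal to obtain $\lambda_\pm=\frac18(4\pm\sqrt{13})$; everything else is forced by purity of $\rho_2$ and by the permutation symmetry of $\ket{\psi_2}$. The one point deserving a word is the optimality of the measurements chosen above, but that is automatic: the post-measurement conditional entropy is non-negative and we have driven it to zero, so there is in fact no real obstacle, and the argument is structurally identical to that of Lemma~\ref{le:d_rho1} with different eigenvalues.
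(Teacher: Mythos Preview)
Your proposal is correct and follows essentially the same template the paper uses (it simply says the proof is ``similar to the proof of Lemma~\ref{le:d_rho1}''): compute the single common marginal entropy via the eigenvalues $\lambda_\pm=\frac18(4\pm\sqrt{13})$, exhibit a measurement driving the conditional entropy to zero, and invoke permutation symmetry. Your measurement choices are in fact cleaner than the paper's Lemma~\ref{le:d_rho1} prototype: you use the computational-basis projectors on the \emph{measured subsystem} alone (e.g.\ $\{\ket0\bra0_{A_2},\ket1\bra1_{A_2}\}$), which manifestly have the form $E_a\ox I$ required by \eqref{eq:dis1}--\eqref{eq:dis2}, whereas the paper's $E_{a_2},E_{a_3}$ in Lemma~\ref{le:d_rho1} are diagonal projectors on the full three-qubit space that do not factor this way; your version also makes the purity argument (rank-one measurement on a pure state yields pure conditionals, hence zero conditional entropy and a genuine minimum) explicit.
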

	\begin{lemma}
	\label{le:d_rho3}		
	Suppose 	$\ket{\psi_3}=U_{oracle}\ket{\psi_2}=\frac{1}{4\sqrt{2}}(\ket{0}+\ket{1}+\ket{2}+\ket{3}+\ket{4}+\ket{5}+\ket{6}-5\ket{7})$. For $\rho_3=\ket{\psi_3}\bra{\psi_3}$, we denote $A_3,B_3$ and $C_3$ as three subsystems of $\rho_3$. Then the discord 
	\begin{equation}
		\label{eq:d_rho3}
		\begin{aligned}
		&D(B_3C_3|A_3)=D(A_3C_3|B_3)=D(A_3B_3|C_3)\\
		&=D(C_3|A_3B_3)=D(B_3|A_3C_3)=D(A_3|B_3C_3)\\
		&=-\frac{1}{16}(8+\sqrt{37}))\log{\frac{1}{16}(8+\sqrt{37})}-\frac{1}{16}(8-\sqrt{37})\log{\frac{1}{16}(8-\sqrt{37})}\\&\approx 0.52		
	\end{aligned}
	\end{equation}
	
	\begin{equation}
		D(B_3C_3|A_3)=S(A_3)=0.52, D(C_3|A_3B_3)=S(A_3B_3)\approx 0.52.
	\end{equation}
	\end{lemma}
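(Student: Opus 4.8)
The plan is to run the same argument as in the proof of Lemma~\ref{le:d_rho1}, now with the amplitudes of $\ket{\psi_3}$ from \eqref{eq:rho3}. First I would write $\rho_3=\ket{\psi_3}\bra{\psi_3}$ as $\frac{1}{32}$ times the outer product of $(1,1,1,1,1,1,1,-5)$ with itself and trace out subsystems. Since the amplitude of $\ket{ijk}$ in $\ket{\psi_3}$ is $-5$ on $\ket{111}$ and $+1$ on the other seven basis states, $\ket{\psi_3}$ is invariant under every permutation of the three qubits, so the single-qubit marginals all agree and the two-qubit marginals all agree. I would record $\rho_{A_3}=\rho_{B_3}=\rho_{C_3}=\frac{1}{32}\bma 4&-2\\-2&28\ema$, whose eigenvalues are $\frac{1}{16}(8\pm\sqrt{37})$, so that
\[
S(A_3)=-\frac{1}{16}(8+\sqrt{37})\log\frac{1}{16}(8+\sqrt{37})-\frac{1}{16}(8-\sqrt{37})\log\frac{1}{16}(8-\sqrt{37})\approx0.52,
\]
and $\rho_{A_3B_3}=\rho_{A_3C_3}=\rho_{B_3C_3}=\frac{1}{32}\bma 2&2&2&-4\\2&2&2&-4\\2&2&2&-4\\-4&-4&-4&26\ema$. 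Because $\rho_3$ is pure across the cut $A_3\,|\,B_3C_3$, this $4\times 4$ matrix has the same nonzero spectrum as $\rho_{A_3}$, hence $S(A_3B_3)=S(B_3C_3)=S(A_3)$, while $S(\rho_3)=0$.

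Next I would handle the minimization in \eqref{eq:D}. Splitting off the first qubit, $\ket{\psi_3}\propto\ket{0}_{A_3}\ket{u}_{B_3C_3}+\ket{1}_{A_3}\ket{v}_{B_3C_3}$ for the two-qubit vectors $\ket{u}=\ket{00}+\ket{01}+\ket{10}+\ket{11}$ and $\ket{v}=\ket{00}+\ket{01}+\ket{10}-5\ket{11}$, so a computational-basis measurement of the qubit $A_3$ sends $\rho_3$ to an ensemble whose conditional states on $B_3C_3$ are proportional to $\ket{u}\bra{u}$ and $\ket{v}\bra{v}$, both rank one; thus $\sum_a p_a S(\rho_{B_3C_3|a})=0$. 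As in Lemma~\ref{le:d_rho1} this is the measurement $E_{a_1}=\diag(1,1,1,1,0,0,0,0)$, $E_{a_2}=\diag(0,0,0,0,1,1,1,0)$, $E_{a_3}=\diag(0,0,0,0,0,0,0,1)$, and it gives $D(B_3C_3|A_3)=S(A_3)\approx0.52$. In the same way, a computational-basis measurement of the pair $A_3B_3$ leaves the remaining qubit $C_3$ pure in every branch, so that term again vanishes and $D(C_3|A_3B_3)=S(A_3B_3)\approx0.52$.

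Finally, the permutation symmetry of $\ket{\psi_3}$ identifies $D(A_3C_3|B_3)$ and $D(A_3B_3|C_3)$ with $D(B_3C_3|A_3)$, and $D(B_3|A_3C_3)$ and $D(A_3|B_3C_3)$ with $D(C_3|A_3B_3)$, which gives the full chain of equalities in \eqref{eq:d_rho3}. There is no genuine obstacle; the one point that needs care is the optimality of the chosen measurements, and this is immediate, because $\sum_a p_a S(\rho_{B_3C_3|a})\ge0$ always while projecting one party of a pure global state onto a pure state leaves the complementary party pure, so the displayed projective measurement attains the value $0$ and hence realizes the minimum — no optimization over general POVMs is required. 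Everything else is the finite linear-algebra computation sketched above.
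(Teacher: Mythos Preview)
Your proposal is correct and follows essentially the same route as the paper, which simply remarks that the proof is ``similar to the proof of Lemma~\ref{le:d_rho1}'': compute the one- and two-qubit marginals of $\rho_3$, read off their common nonzero spectrum $\tfrac{1}{16}(8\pm\sqrt{37})$, and exhibit a projective measurement on the measured party that makes every conditional state pure so the minimization term in \eqref{eq:D} vanishes. Your explicit remark that $\sum_a p_a S(\rho_{B_3C_3|a})\ge 0$ forces optimality, together with the permutation-symmetry argument for the remaining equalities, is a welcome clarification beyond what the paper spells out.
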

	\begin{lemma}
	\label{le:d_rho4}		
	Suppose $\ket{\psi_4}=U_{diffuser}\ket{\psi_3}=\frac{1}{-8\sqrt{2}}(\ket{0}+\ket{1}+\ket{2}+\ket{3}+\ket{4}+\ket{5}+\ket{6}-11\ket{7})$. For $\rho_4=\ket{\psi_4}\bra{\psi_4}$, we denote $A_4,B_4$ and $C_4$ as three subsystems of $\rho_4$. Then the discord 
	\begin{equation}
		\label{eq:d_rho4}
		\begin{aligned}
		&D(B_4C_4|A_4)=D(A_4C_4|B_4)=D(A_1B_1|C_4)\\
		&=D(C_4|A_4B_4)=D(B_4|A_4C_4)=D(A_4|B_4C_4)\\
		&=-\frac{1}{32}(16+\sqrt{229}))\log{\frac{1}{32}(16+\sqrt{229})}-\frac{1}{32}(16-\sqrt{229})\log{\frac{1}{32}(16-\sqrt{229})}\\
		&\approx 0.17
	\end{aligned}
	\end{equation}
	\begin{equation}
D(B_4C_4|A_4)=S(A_4)=0.17, D(C_4|A_4B_4)=S(A_4B_4)\approx 0.17.
	\end{equation}
	\end{lemma}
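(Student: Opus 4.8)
The plan is to mimic the proof of Lemma \ref{le:d_rho1} almost verbatim, using two structural facts about $\ket{\psi_4}$. First, $\ket{\psi_4}$ is invariant under every permutation of the three qubits: the amplitude of $\ket{j}$ equals $1$ whenever $j\ne 7$ and equals $-11$ for $\ket{7}=\ket{111}$, so it depends only on the Hamming weight of $j$. This immediately reduces the six discords in \eqref{eq:d_rho4} to just two computations --- one ``two-versus-one'' quantity such as $D(B_4C_4|A_4)$ and one ``one-versus-two'' quantity such as $D(C_4|A_4B_4)$ --- since relabelling subsystems carries any one into any other. Second, $\rho_4=\proj{\psi_4}$ is pure, so $S(\rho_4)=0$ in the discord formula \eqref{eq:D}.

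First I would compute the needed marginals. Grouping $\ket{\psi_4}$ according to the value of the first qubit gives $\ket{\psi_4}\propto \ket{0}_{A_4}\ox(\ket{0}+\ket{1}+\ket{2}+\ket{3})_{B_4C_4}+\ket{1}_{A_4}\ox(\ket{0}+\ket{1}+\ket{2}-11\ket{3})_{B_4C_4}$, and tracing out $B_4C_4$ yields $\rho_{A_4}=\frac{1}{128}\bma 4&-8\\-8&124\ema$, with $\tr\rho_{A_4}=1$ and $\det\rho_{A_4}=\frac{27}{1024}$, so its eigenvalues are $\frac{1}{32}(16\pm\sqrt{229})$ and $S(A_4)=-\frac{1}{32}(16+\sqrt{229})\log\frac{1}{32}(16+\sqrt{229})-\frac{1}{32}(16-\sqrt{229})\log\frac{1}{32}(16-\sqrt{229})\approx 0.17$. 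By the permutation symmetry, $\rho_{B_4}$ and $\rho_{C_4}$ have the same spectrum; and since $\rho_4$ is pure, the Schmidt decomposition across the $A_4B_4\,|\,C_4$ cut forces the nonzero spectrum of $\rho_{A_4B_4}$ to equal that of $\rho_{C_4}$, so $S(A_4B_4)=S(C_4)=S(A_4)\approx 0.17$, and likewise for the other two-qubit marginals. (Alternatively one can diagonalise the rank-two matrix $\rho_{A_4B_4}=\frac{1}{128}\bma 2&2&2&-10\\2&2&2&-10\\2&2&2&-10\\-10&-10&-10&122\ema$ directly and read off the same eigenvalues.)

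Next I would exhibit the minimising measurement. For any pure tripartite state $\rho_{ABC}$, conditioning the measured party on any rank-one projective measurement collapses the complementary pair to a pure state, so every $S(\rho_{BC|a})$ vanishes; in particular the computational-basis measurement $\{\proj{0}_{A_4},\proj{1}_{A_4}\}$ (equivalently, the POVM used in the proof of Lemma \ref{le:d_rho1}) gives $\sum_a p_a S(\rho_{B_4C_4|a})=0$, and since this average is a sum of nonnegative terms it is the minimum. Substituting into \eqref{eq:D} with $S(\rho_4)=0$ gives $D(B_4C_4|A_4)=S(A_4)\approx 0.17$; the identical argument applied to a rank-one measurement on $A_4B_4$ gives $D(C_4|A_4B_4)=S(A_4B_4)\approx 0.17$. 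Combining these with the permutation symmetry of $\ket{\psi_4}$ yields the full chain of equalities in \eqref{eq:d_rho4}.

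I do not expect a genuine obstacle: once the block form of $\ket{\psi_4}$ is written down the argument is purely mechanical. The two places that deserve a line of care are (i) justifying that the minimisation over measurements really attains $0$ --- a two-sided statement, nonnegativity of von Neumann entropy for the lower bound and the explicit rank-one measurement for the upper bound --- and (ii) the eigenvalue bookkeeping, i.e. checking $\tr\rho_{A_4}=1$ and $\det\rho_{A_4}=\frac{27}{1024}$ so that the roots are $\frac{1}{32}(16\pm\sqrt{229})$, and then evaluating the binary entropy of these eigenvalues to confirm the numerical value $\approx 0.17$.
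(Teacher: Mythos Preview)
Your proposal is correct and follows essentially the same route as the paper, which simply states that the proof of this lemma is ``similar to the proof of Lemma~\ref{le:d_rho1}.'' Your write-up is in fact a bit cleaner than the template proof: you invoke the permutation symmetry of $\ket{\psi_4}$ to collapse the six discords to two, you use the Schmidt decomposition across the $A_4B_4\,|\,C_4$ cut to get $S(A_4B_4)=S(C_4)$ without recomputing the $4\times4$ marginal, and you justify $\min_{\{E_a\}}\sum_a p_a S(\rho_{B_4C_4|a})=0$ via the general fact that rank-one measurements on a pure joint state yield pure conditionals, rather than exhibiting an ad hoc POVM as in the proof of Lemma~\ref{le:d_rho1}.
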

	The proofs of these three lemmas are similar to proof of Lemma \ref{le:d_rho1}.
Then we make a table to show the change of the discord in every step in Grover’s algorithm.
\begin{table}[htb]   
	\begin{center}   
		\caption{The discord of every step in Grover's algorithm}  
		\label{tab:dis} 
		\begin{tabular}{|c|c|c|}    
			\hline   \textbf{Quantum state} & \textbf{Discord} & \textbf{Value} (keep two demical places) \\   
			\hline   $\ket{\psi_{1}}$ & $-\frac{1}{4}\log{\frac{1}{4}}-\frac{3}{4}\log{\frac{3}{4}}$ &  $0.81$\\
			\hline   $\ket{\psi_{2}}$ & $-\frac{1}{8}(4+\sqrt{13}))\log{\frac{1}{8}(4+\sqrt{13})}-\frac{1}{8}(4-\sqrt{13})\log{\frac{1}{8}(4-\sqrt{13})}$ & $0.28$\\  
			\hline   $\ket{\psi_{3}}$ & $-\frac{1}{16}(8+\sqrt{37}))\log{\frac{1}{16}(8+\sqrt{37})}-\frac{1}{16}(8-\sqrt{37})\log{\frac{1}{16}(8-\sqrt{37})}$ & $0.52$\\
			\hline   $\ket{\psi_{4}}$ & $-\frac{1}{32}(16+\sqrt{229}))\log{\frac{1}{32}(16+\sqrt{229})}-\frac{1}{32}(16-\sqrt{229})\log{\frac{1}{32}(16-\sqrt{229})}$ & $0.17$\\ 
			\hline   
		\end{tabular}   
	\end{center}   
\end{table}

In this table, we conclude that the discord of state in Grover's algorithm is getting smaller in the step oracle. This trend is similar to the the change of the coherence in every step in Grover’s algorithm.

\subsection{Geometric measure of entanglement}
\label{subsec:GM}
	In this subsection, we will investigate GM in the process of Grover's Algorithm.
	Firstly, we investigate GM of the quantum state $\rho_1=\ket{\psi_{1}}\bra{\psi_{1}}$ in \eqref{eq:rho1}. We have the following observation.

	\begin{lemma}
		\label{le:GM1}
	Suppose $\ket{\psi_1}=U_{oracle}\ket{s}=\frac{1}{2\sqrt{2}}(\ket{0}+\ket{1}+\ket{2}+\ket{3}+\ket{4}+\ket{5}+\ket{6}-\ket{7})$. For $\rho_1=\ket{\psi_1}\bra{\psi_1}$, we get that
		$G(\rho_1)\approx 0.56$. 
	\end{lemma}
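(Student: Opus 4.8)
The plan is to use the fact that $\rho_1$ is \emph{pure}, so Definition~\ref{de:GM} gives $\Lambda^2(\rho_1)=\max_{\ket{\phi}\in\pro}\abs{\braket{\phi}{\psi_1}}^2$, and then to collapse the optimization over all three‑qubit product states to a single real parameter. First I would observe that the amplitude vector $\tfrac{1}{2\sqrt2}(1,1,1,1,1,1,1,-1)$ is invariant under every permutation of the three qubits, so $\ket{\psi_1}$ is symmetric, and that all its amplitudes are real. Using the symmetry idea behind Lemma~\ref{le:GM_cloest} together with the reality of the coefficients (in the spirit of Lemma~\ref{le:GM_non}), I would argue that a closest product state may be taken of the form $\ket{\phi}=\ket{\alpha}^{\ox3}$ with $\ket{\alpha}=\cos\tfrac{\theta}{2}\ket{0}+\sin\tfrac{\theta}{2}\ket{1}$, $\theta\in[0,\pi]$.

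Second, I would compute the overlap in closed form. Since $U_{oracle}\ket{s}=\ket{s}-\tfrac{1}{\sqrt2}\ket{7}$ and $\ket{s}=\ket{+}^{\ox3}$ with $\ket{+}=\tfrac{1}{\sqrt2}(\ket0+\ket1)$, one has
\begin{equation}
\braket{\alpha^{\ox3}}{\psi_1}=\braket{\alpha}{+}^{3}-\tfrac{1}{\sqrt2}\braket{\alpha}{1}^{3}=\tfrac{1}{2\sqrt2}\Big[\big(\cos\tfrac\theta2+\sin\tfrac\theta2\big)^{3}-2\sin^{3}\tfrac\theta2\Big].
\end{equation}
Differentiating this expression and clearing denominators, the stationarity condition becomes the cubic $t^{3}+t^{2}-3t-1=0$ in $t=\cot\tfrac{\theta}{2}$; its three roots are real, and comparing the corresponding values of $\abs{\braket{\alpha^{\ox3}}{\psi_1}}$ (together with the endpoints $\theta\in\{0,\pi\}$, which give $\tfrac18$) shows the largest root $t_\ast\approx1.481$ furnishes the global maximum. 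Substituting $t_\ast$ back yields $\Lambda^2(\rho_1)\approx0.676$, hence $G(\rho_1)=-2\log\Lambda(\rho_1)=-\log\Lambda^2(\rho_1)\approx0.56$ by~\eqref{eq:GM2}.

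The main obstacle is justifying the reduction to symmetric product states: Lemma~\ref{le:GM_cloest} is stated only for $N>3$, so for this three‑qubit state one must separately rule out that a non‑symmetric product state (or a symmetric one with unequal single‑qubit factors) does better. I would handle this by writing a generic product state $\ket{a}\ox\ket{b}\ox\ket{c}$, expressing $\abs{(\bra{a}\ox\bra{b}\ox\bra{c})\ket{\psi_1}}^2$ as a function of the three Bloch vectors, and verifying through the Lagrange/stationarity equations that up to qubit permutations and local phases the only maximizer is the symmetric one found above — alternatively one may invoke the known description of closest product states for three‑qubit symmetric states. A minor point to dispatch along the way is that admitting a relative phase $e^{i\varphi}$ in each single‑qubit factor cannot increase $\abs{\braket{\alpha^{\ox3}}{\psi_1}}$, which follows from the reality and equal magnitude of the amplitudes of $\ket{\psi_1}$.
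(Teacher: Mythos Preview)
Your proposal is correct and follows essentially the same route as the paper: restrict the optimization in Definition~\ref{de:GM} to symmetric product states $(\cos\alpha\ket{0}+e^{i\beta}\sin\alpha\ket{1})^{\ox 3}$ and maximize the resulting overlap. The paper keeps both $\alpha$ and $\beta$ and locates the maximum numerically with Mathematica (finding $\alpha\approx0.59$, $\beta\approx0$, $\Lambda^2\approx0.67$); you instead argue up front that the phase may be set to zero, use the decomposition $\ket{\psi_1}=\ket{+}^{\ox3}-\tfrac{1}{\sqrt2}\ket{7}$ to obtain a one-variable overlap, and find the extremum analytically via the cubic $t^{3}+t^{2}-3t-1=0$ in $t=\cot\tfrac{\theta}{2}$, arriving at the same numbers.

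One point worth noting: you correctly observe that Lemma~\ref{le:GM_cloest} is stated only for $N>3$, so the symmetric reduction needs separate justification here; the paper simply invokes the lemma without comment, so your proposed Lagrange-multiplier check (or appeal to the known result for three-qubit symmetric states) would in fact close a small gap that the paper leaves open.
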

	\begin{proof}
		We recall the definition of Geometric measure of entanglement in Definition \ref{de:GM}. Then we have 
		\begin{eqnarray}
			\label{eq:GM_proof1}
			\Lambda^2(\rho_1):=\max_{\sigma \in SEP} \tr(\rho_1 \sigma_1)=\max_{\ket{\phi_1} \in PRO} \bra{\phi_1}\rho_1\ket{\phi_1}.
		\end{eqnarray}
	We have known that $\ket{\phi_1}$ is  fully pure product states in the Hilbert space. By Lemma \ref{le:GM_cloest}, we get that $\ket{\phi_1}$  is a symmetric state. Then we suppose  $\ket{\phi_1}=(\cos\alpha\ket{0}+e^{i\beta} \sin\alpha \ket{1})^{\otimes 3}$ is a closest product state. 
	Then we have that
	\begin{equation}
		\label{eq:GM1_p1}
		\begin{aligned}
			\ket{\phi_1}=&\cos^3 \alpha\ket{000}+\cos^2\alpha\sin \alpha e^{i\beta}(\ket{001}+\ket{010}+\ket{100})\\&+ 
			\cos\alpha\sin^2 \alpha e^{2i\beta}(\ket{011}+\ket{101}+\ket{110})+\sin^3 \alpha e^{3i\beta}\ket{111},
		\end{aligned}
	\end{equation}
	\begin{eqnarray}
		\label{eq:GM1_p2}
		\Lambda^2(\rho_1)&=\max_{\ket{\phi_1} \in PRO}\bra{\phi_1}\rho_1\ket{\phi_1}
		=\max_{x_i}\frac{1}{8}|\sum_{i=1}^7x_i-x_8|^2,
	\end{eqnarray}
	where $x_1=\cos^3 \alpha$, $x_2=x_3=x_5=\cos^2\alpha\sin \alpha e^{i\beta}$,
	$x_4=x_6=x_7=\cos\alpha\sin^2 \alpha e^{2i\beta}$, and 
	$x_8=\sin^3 \alpha e^{3i\beta}$. 
	
	Furthermore, we have 
	\begin{equation}
		\label{eq:GM1_p3}	
		\begin{aligned}
		\Lambda^2(\rho_1):=\max_{\ket{\phi_1} \in PRO}\bra{\phi_1}\rho_1\ket{\phi_1}=\max_{a,b}\frac{1}{8}|a+ib|^2=\max_{a,b}\frac{1}{8}(a^2+b^2),
		\end{aligned}
	\end{equation}
	where $a=\cos^3\alpha+3\cos
	^2\alpha\sin \alpha\cos\beta+ 3\cos
	\alpha\sin^2 \alpha\cos2\beta- \sin^3 \alpha\cos3\beta$ and $b=3\cos
	^2\alpha\sin \alpha\sin\beta+ 3\cos
	\alpha\sin^2 \alpha\sin2\beta- \sin^3 \alpha\sin3\beta$. Then we calculate the maximum value of the function  
	\begin{eqnarray}
		f_1=\frac{1}{8}(a^2+b^2).
	\end{eqnarray}
	Using Mathematica we get the maximum value of $f_1$ is approximately $0.67$  , when $\alpha\approx 0.59$ and $\beta\approx 0$.  So the GM of $\rho_1$,
	\begin{eqnarray}
		\label{eq:GM1}
			G(\rho_1)=-2\log\Lambda(\rho_1)\approx 0.56.			
	\end{eqnarray}	
	\end{proof}
	Then we investigate GM of three quantum states $\rho_2=\ket{\psi_2}\bra{\psi_2}$, $\rho_3=\ket{\psi_3}\bra{\psi_3}$ and $\rho_4=\ket{\psi_4}\bra{\psi_4}$ in the Grover’s algorithm. We have the following observations.
	
	\begin{lemma}
		\label{le:GM2}
		Suppose $\ket{\psi_2}=U_{diffuser}\ket{\psi_1}=\frac{1}{4\sqrt{2}}(\ket{0}+\ket{1}+\ket{2}+\ket{3}+\ket{4}+\ket{5}+\ket{6}+5\ket{7})$. For $\rho_2=\ket{\psi_2}\bra{\psi_2}$, we get that
		$G(\rho_2)\approx 0.11$. 
	\end{lemma}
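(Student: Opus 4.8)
The plan is to follow the proof of Lemma~\ref{le:GM1} almost verbatim, since $\ket{\psi_2}$ differs from $\ket{\psi_1}$ only in the (real, positive) weight of the $\ket{7}=\ket{111}$ component. First I would note that $\rho_2=\ket{\psi_2}\bra{\psi_2}$ is a symmetric, non-negative three-qubit pure state, so (exactly as in the proof of Lemma~\ref{le:GM1}, invoking Lemmas~\ref{le:GM_cloest} and~\ref{le:GM_non}) a closest product state may be taken symmetric, $\ket{\phi_2}=(\cos\alpha\ket{0}+e^{i\beta}\sin\alpha\ket{1})^{\otimes 3}$.

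Next I would expand $\ket{\phi_2}$ in the computational basis exactly as in \eqref{eq:GM1_p1} and form the overlap $\bra{\phi_2}\rho_2\ket{\phi_2}=\abs{\braket{\phi_2}{\psi_2}}^2$. Since $\ket{\psi_2}$ has amplitude $\tfrac{1}{4\sqrt{2}}$ on $\ket{0},\dots,\ket{6}$ and $\tfrac{5}{4\sqrt{2}}$ on $\ket{7}$, this gives
\begin{equation}
	\Lambda^2(\rho_2)=\max_{\ket{\phi_2}\in PRO}\bra{\phi_2}\rho_2\ket{\phi_2}=\max_{x_i}\frac{1}{32}\Bigl|\,\sum_{i=1}^{7}x_i+5x_8\,\Bigr|^2,
\end{equation}
with $x_1=\cos^3\alpha$, $x_2=x_3=x_5=\cos^2\alpha\sin\alpha\,e^{i\beta}$, $x_4=x_6=x_7=\cos\alpha\sin^2\alpha\,e^{2i\beta}$ and $x_8=\sin^3\alpha\,e^{3i\beta}$ — the same expression as \eqref{eq:GM1_p2} but with a factor $5$ on $x_8$.

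Then, writing the bracket as $a+ib$ as in \eqref{eq:GM1_p3}, the problem reduces to maximizing $f_2=\frac{1}{32}(a^2+b^2)$ over $\alpha\in[0,\tfrac{\pi}{2}]$ and $\beta\in[0,2\pi)$, where now $a=\cos^3\alpha+3\cos^2\alpha\sin\alpha\cos\beta+3\cos\alpha\sin^2\alpha\cos 2\beta+5\sin^3\alpha\cos 3\beta$ and $b=3\cos^2\alpha\sin\alpha\sin\beta+3\cos\alpha\sin^2\alpha\sin 2\beta+5\sin^3\alpha\sin 3\beta$. Since $\rho_2$ is non-negative, Lemma~\ref{le:GM_non} lets me set $\beta=0$, so it is enough to maximize the single real function $g(\alpha)=\frac{1}{32}\bigl(\cos^3\alpha+3\cos^2\alpha\sin\alpha+3\cos\alpha\sin^2\alpha+5\sin^3\alpha\bigr)^2$ on $[0,\tfrac{\pi}{2}]$. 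Using Mathematica, its maximum is $\approx 0.93$, attained near $\alpha\approx 1.29$, and therefore $G(\rho_2)=-2\log\Lambda(\rho_2)\approx 0.11$.

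The only non-routine point is the reduction to $\beta=0$, i.e.\ that the closest product state can be chosen non-negative, which is precisely Lemma~\ref{le:GM_non}; granting that, verifying that the interior critical point of $g$ on the compact interval $[0,\tfrac{\pi}{2}]$ is the global maximum is a one-variable exercise, and everything else is a mechanical transcription of the $\rho_1$ computation with the weight $5$ inserted on the $\ket{111}$ amplitude.
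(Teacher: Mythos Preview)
Your proposal is correct and follows essentially the same approach as the paper: both use Lemma~\ref{le:GM_cloest} and Lemma~\ref{le:GM_non} to reduce to a symmetric non-negative product ansatz, compute the overlap $\frac{1}{32}\bigl(\sum_{i=1}^7 x_i+5x_8\bigr)^2$, and maximize the resulting one-variable function with Mathematica (the paper reports $\Lambda^2\approx 0.92$ at $\alpha\approx 1.28$, matching your $0.93$ at $\alpha\approx 1.29$ up to rounding). The only cosmetic difference is that the paper applies Lemma~\ref{le:GM_non} at the outset and never introduces the phase $e^{i\beta}$, whereas you carry $\beta$ through the setup (mirroring the proof of Lemma~\ref{le:GM1}) and only then set $\beta=0$.
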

	\begin{proof}
		We recall the definition of Geometric measure of entanglement in Definition \ref{de:GM}. Then we have 
		\begin{eqnarray}
			\label{eq:GM2_proof2}
			\Lambda^2(\rho_2):=\max_{\sigma \in SEP} \tr(\rho_2 \sigma_2)=\max_{\ket{\phi_2} \in PRO} \bra{\phi_2}\rho_2\ket{\phi_2}.
		\end{eqnarray}
		We have known that $\ket{\phi_2}$ is  fully pure product states in the Hilbert space. Using Lemma \ref{le:GM_cloest} and Lemma \ref{le:GM_non},   we get that $\ket{\phi_2}$  is a symmetric and non-negative state. Then we suppose  $\ket{\phi_2}=(\cos\alpha\ket{0}+\sin\alpha \ket{1})^{\otimes 3}$ is a closest product state, where $0\leq \alpha \leq\frac{\pi}{2}$.
		\begin{eqnarray}
			\label{eq:GM2_2}
			\Lambda^2(\rho_2)&=\bra{\phi_2}\rho_2\ket{\phi_2}
			=\frac{1}{32}(\sum_{i=1}^7x_i+5x_8)^2,
		\end{eqnarray}
		where $x_1=\cos^3 \alpha$, $x_2=x_3=x_5=\cos^2\alpha\sin \alpha$,
		$x_4=x_6=x_7=\cos\alpha\sin^2 \alpha$, and 
		$x_8=\sin^3 \alpha$.

		Using Mathematica we get the maximum of $\Lambda^2(\rho_2)$ is $0.92$ when $\alpha\approx 1.28$. So the GM of $\rho_2$,
		\begin{eqnarray}
			\label{eq:GM2}
			G(\rho_2)=-2\log\Lambda(\rho_2)\approx0.11.
		\end{eqnarray} 
	\end{proof}
	
	\begin{lemma}
		\label{le:GM3}
		Suppose $\ket{\psi_3}=U_{oracle}\ket{\psi_2}=\frac{1}{4\sqrt{2}}(\ket{0}+\ket{1}+\ket{2}+\ket{3}+\ket{4}+\ket{5}+\ket{6}-5\ket{7})$. For $\rho_3=\ket{\psi_3}\bra{\psi_3}$, we get that
		$G(\rho_3)\approx 0.24$. 
	\end{lemma}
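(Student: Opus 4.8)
The plan is to follow the strategy of the proofs of Lemmas~\ref{le:GM1} and~\ref{le:GM2}. By Definition~\ref{de:GM} we have $\Lambda^2(\rho_3)=\max_{\ket{\phi_3}\in PRO}\bra{\phi_3}\rho_3\ket{\phi_3}=\max_{\ket{\phi_3}\in PRO}|\braket{\phi_3}{\psi_3}|^2$. Since $\ket{\psi_3}$ is a symmetric three-qubit state, Lemma~\ref{le:GM_cloest} lets us take the closest product state to be symmetric, so I would set $\ket{\phi_3}=(\cos\alpha\ket{0}+e^{i\beta}\sin\alpha\ket{1})^{\otimes 3}$ and expand it exactly as in~\eqref{eq:GM1_p1}.

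Substituting this expansion into $|\braket{\phi_3}{\psi_3}|^2$ and using $\ket{\psi_3}=\frac{1}{4\sqrt{2}}(\ket{0}+\ket{1}+\ket{2}+\ket{3}+\ket{4}+\ket{5}+\ket{6}-5\ket{7})$, the same bookkeeping as in Lemma~\ref{le:GM1} gives $\Lambda^2(\rho_3)=\max_{\alpha,\beta}\frac{1}{32}|\sum_{i=1}^{7}x_i-5x_8|^2=\max_{\alpha,\beta}\frac{1}{32}(a^2+b^2)$, where $x_1=\cos^3\alpha$, $x_2=x_3=x_5=\cos^2\alpha\sin\alpha\,e^{i\beta}$, $x_4=x_6=x_7=\cos\alpha\sin^2\alpha\,e^{2i\beta}$, $x_8=\sin^3\alpha\,e^{3i\beta}$, $a=\cos^3\alpha+3\cos^2\alpha\sin\alpha\cos\beta+3\cos\alpha\sin^2\alpha\cos 2\beta-5\sin^3\alpha\cos 3\beta$, and $b$ is the same expression with each cosine of a multiple of $\beta$ replaced by the corresponding sine. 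I would then maximise $\frac{1}{32}(a^2+b^2)$ over $\alpha$ and $\beta$ with Mathematica. Unlike in Lemma~\ref{le:GM2}, one cannot drop the phase $\beta$ here because $\rho_3$ is not a non-negative state, so Lemma~\ref{le:GM_non} does not apply; I expect the maximum to occur near $\beta=\pi$, driven by the dominant $-5\ket{7}$ amplitude, where the objective collapses to the single-variable function $\frac{1}{32}[(\cos\alpha-\sin\alpha)^3+6\sin^3\alpha]^2$ on $[0,\pi/2]$.

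Maximising that function I expect to get $\Lambda^2(\rho_3)\approx 0.85$ at some $\alpha$ close to $1.4$, and hence, by~\eqref{eq:GM2}, $G(\rho_3)=-2\log\Lambda(\rho_3)\approx 0.24$.

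The step I expect to be the main obstacle is the same one that underlies Lemmas~\ref{le:GM1} and~\ref{le:GM2}: justifying the symmetric single-qubit ansatz and certifying that the critical point returned by Mathematica is the global maximum over \emph{all} pure product states, not merely the symmetric ones --- all the more so since Lemma~\ref{le:GM_cloest} is stated for $N>3$, whereas here $N=3$. For $\rho_3$ this is more delicate than for $\rho_2$: because $\rho_3$ has the negative amplitude $-5$, the non-negativity reduction of Lemma~\ref{le:GM_non} is unavailable, the phase $\beta$ must be carried through the optimisation, and one has to verify that the $\beta=\pi$ branch strictly dominates the $\beta=0$ branch (intuitively, the large $-5\ket{7}$ weight forces a $\ket{111}$-heavy closest product state, which is precisely the $\beta=\pi$, $\alpha$-near-$\pi/2$ regime).
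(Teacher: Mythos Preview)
Your proposal is correct and follows essentially the same approach as the paper: the paper also invokes Lemma~\ref{le:GM_cloest} to restrict to the symmetric ansatz $\ket{\phi_3}=(\cos\alpha\ket{0}+e^{i\beta}\sin\alpha\ket{1})^{\otimes 3}$, obtains the identical expression $\Lambda^2(\rho_3)=\max_{\alpha,\beta}\frac{1}{32}(a^2+b^2)$ with the same $a$ and $b$, and reports via Mathematica a maximum of approximately $0.85$ at $\alpha\approx 1.43$, $\beta=\pi$, giving $G(\rho_3)\approx 0.24$. Your additional single-variable reduction at $\beta=\pi$ and your caveat about Lemma~\ref{le:GM_cloest} being stated for $N>3$ are observations the paper does not make, but the underlying argument is the same.
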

	\begin{proof}
		We recall the definition of Geometric measure of entanglement in Definition \ref{de:GM}. Then we have 
		\begin{eqnarray}
			\label{eq:GM_proof3}
			\Lambda^2(\rho_3):=\max_{\sigma \in SEP} \tr(\rho_3 \sigma_3)=\max_{\ket{\phi_3} \in PRO} \bra{\phi_3}\rho_3\ket{\phi_3}.
		\end{eqnarray}
		We have known that $\ket{\phi_3}$ is  fully pure product states in the Hilbert space. By Lemma \ref{le:GM_cloest}, we get that $\ket{\phi_1}$  is a symmetric state. Then we suppose  $\ket{\phi_3}=(\cos\alpha\ket{0}+e^{i\beta} \sin\alpha \ket{1})^{\otimes 3}$ is a closest product state. 
		\begin{eqnarray}
			\label{eq:GM3_3}
			\Lambda^2(\rho_3)&=\max_{\ket{\phi_3} \in PRO}\bra{\phi_3}\rho_3\ket{\phi_3}
			=\max_{x_i}\frac{1}{32}|\sum_{i=1}^7x_i-5x_8|^2,
		\end{eqnarray}
		where $x_1=\cos^3 \alpha$, $x_2=x_3=x_5=\cos^2\alpha\sin \alpha e^{i\beta}$,
		$x_4=x_6=x_7=\cos\alpha\sin^2 \alpha e^{2i\beta}$, and 
		$x_8=\sin^3 \alpha e^{3i\beta}$.  
		
		Furthermore, we have 
		\begin{equation}
			\label{eq:GM3_p3}	
			\begin{aligned}
				\Lambda^2(\rho_3)=\max_{\ket{\phi_3} \in PRO}\bra{\phi_3}\rho_3\ket{\phi_3}
				=\max_{a,b}\frac{1}{32}|a+ib|^2
				=\max_{a,b}\frac{1}{32}(a^2+b^2),
			\end{aligned}
		\end{equation}
		where $a=\cos^3\alpha+3\cos
		^2\alpha\sin \alpha\cos\beta+ 3\cos
		\alpha\sin^2 \alpha\cos2\beta-5\sin^3 \alpha\cos3\beta$ and $b=3\cos
		^2\alpha\sin \alpha\sin\beta+ 3\cos
		\alpha\sin^2 \alpha\sin2\beta-5\sin^3 \alpha\sin3\beta$. Then we calculate the maximum value of the function 
		\begin{eqnarray}
			f_3=\frac{1}{32}(a^2+b^2).
		\end{eqnarray}
		Using Mathematica we get the maximum value of $f_3$ is approximately $0.85$  , when $\alpha\approx 1.43$ and $\beta=\pi$.  So the GM of $\rho_3$,			
		\begin{eqnarray}
			\label{eq:GM3}
			G(\rho_3)=-2\log\Lambda(\rho_3)\approx 0.24	.
		\end{eqnarray} 	
	\end{proof}

\begin{lemma}
	\label{le:GM4}
	Suppose $\ket{\psi_4}=U_{diffuser}\ket{\psi_3}=\frac{1}{-8\sqrt{2}}(\ket{0}+\ket{1}+\ket{2}+\ket{3}+\ket{4}+\ket{5}+\ket{6}-11\ket{7})$. For $\rho_4=\ket{\psi_4}\bra{\psi_4}$, we get that
	$G(\rho_4)=0.05$. 
\end{lemma}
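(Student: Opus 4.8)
The plan is to proceed exactly as in the proof of Lemma~\ref{le:GM3}. Up to the irrelevant global phase $-1$, the amplitudes of $\ket{\psi_4}$ depend only on the Hamming weight of the basis label, and the amplitude on $\ket{7}$ has sign opposite to the other seven, so $\rho_4$ is permutation-symmetric but not non-negative; hence Lemma~\ref{le:GM_non} is unavailable and only Lemma~\ref{le:GM_cloest} is used. First I would recall Definition~\ref{de:GM}, so that $\Lambda^2(\rho_4)=\max_{\ket{\phi_4}\in PRO}\bra{\phi_4}\rho_4\ket{\phi_4}$, and invoke Lemma~\ref{le:GM_cloest} to restrict the closest product state to the symmetric family $\ket{\phi_4}=(\cos\alpha\ket{0}+e^{i\beta}\sin\alpha\ket{1})^{\otimes 3}$ with $\alpha\in[0,\tfrac{\pi}{2}]$ and $\beta\in[0,2\pi)$.

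Next I would expand $\ket{\phi_4}$ in the computational basis as in \eqref{eq:GM1_p1}, take the overlap with $\ket{\psi_4}$, and drop the global phase, obtaining
\begin{equation}
\Lambda^2(\rho_4)=\max_{x_i}\frac{1}{128}\Bigl|\sum_{i=1}^{7}x_i-11x_8\Bigr|^2,
\end{equation}
where $x_1=\cos^3\alpha$, $x_2=x_3=x_5=\cos^2\alpha\sin\alpha\,e^{i\beta}$, $x_4=x_6=x_7=\cos\alpha\sin^2\alpha\,e^{2i\beta}$, $x_8=\sin^3\alpha\,e^{3i\beta}$, and the prefactor $\tfrac{1}{128}$ equals $|{-}8\sqrt{2}|^{-2}$. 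Setting $w=\sin\alpha\,e^{i\beta}$, the bracket collapses to $(\cos\alpha+w)^3-12w^3$ --- the factor $12$ being one more than the modulus $11$ of the flipped amplitude --- so equivalently
\begin{equation}
\Lambda^2(\rho_4)=\max_{\alpha,\beta}\frac{1}{128}\bigl|(\cos\alpha+\sin\alpha\,e^{i\beta})^3-12(\sin\alpha\,e^{i\beta})^3\bigr|^2 .
\end{equation}
Splitting into real and imaginary parts reproduces the form used in Lemma~\ref{le:GM3}, namely $\Lambda^2(\rho_4)=\max_{a,b}\tfrac{1}{128}(a^2+b^2)$ with $a=\cos^3\alpha+3\cos^2\alpha\sin\alpha\cos\beta+3\cos\alpha\sin^2\alpha\cos2\beta-11\sin^3\alpha\cos3\beta$ and $b=3\cos^2\alpha\sin\alpha\sin\beta+3\cos\alpha\sin^2\alpha\sin2\beta-11\sin^3\alpha\sin3\beta$.

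Then I would maximize $f_4=\tfrac{1}{128}(a^2+b^2)$ with Mathematica. By analogy with Lemmas~\ref{le:GM1} and \ref{le:GM3} I expect the optimum at $\beta=\pi$ --- so that the large, sign-reversed amplitude on $\ket{7}$ adds coherently with the rest --- and at some $\alpha$ slightly below $\tfrac{\pi}{2}$, larger than the maximizing $\alpha$ found for $\rho_3$ because the $\ket{7}$ amplitude is now even more dominant, giving the maximal value $f_4^{\max}\approx 0.97$. Hence $\Lambda^2(\rho_4)\approx 0.97$ and, by \eqref{eq:GM2}, $G(\rho_4)=-2\log\Lambda(\rho_4)\approx 0.05$.

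The main obstacle is certifying that last maximization: $f_4$ is a trigonometric polynomial of modest degree in $(\alpha,\beta)$ with several critical points, so one should confirm that the numerically located maximizer is the global one --- for instance by enumerating the real solutions of $\partial_\alpha f_4=\partial_\beta f_4=0$ and comparing their values, and by ruling out the boundary values at $\alpha=0$ and $\alpha=\tfrac{\pi}{2}$, where $f_4$ equals $\tfrac{1}{128}$ and $\tfrac{121}{128}$ respectively, both below $0.97$. Once the reduction to the symmetric two-parameter family via Lemma~\ref{le:GM_cloest} is in place, the remainder is the same routine computation already carried out for Lemmas~\ref{le:GM1} and \ref{le:GM3}.
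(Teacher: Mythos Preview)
Your proposal is correct and follows essentially the same route as the paper: invoke Lemma~\ref{le:GM_cloest} to reduce to the symmetric family $(\cos\alpha\ket{0}+e^{i\beta}\sin\alpha\ket{1})^{\otimes 3}$, write $\Lambda^2(\rho_4)=\tfrac{1}{128}(a^2+b^2)$ with the same $a,b$, and maximize numerically. The only cosmetic discrepancy is the location of the maximizer: the paper, not restricting $\alpha$ to $[0,\tfrac{\pi}{2}]$, reports $\alpha\approx 1.64$, $\beta\approx 0$ and $f_4^{\max}\approx 0.96$, which is the same product state (up to global phase) as your anticipated $\beta=\pi$, $\alpha$ just below $\tfrac{\pi}{2}$.
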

\begin{proof}
	We recall the definition of geometric measure of entanglement in Definition \ref{de:GM}. Then we have 
	\begin{eqnarray}
		\label{eq:GM4_proof1}
		\Lambda^2(\rho_4):=\max_{\sigma \in SEP} \tr(\rho_4 \sigma_4)=\max_{\ket{\phi_4} \in PRO} \bra{\phi_4}\rho_4\ket{\phi_4}.
	\end{eqnarray}
	We have known that $\ket{\phi_4}$ is  fully pure product states in the Hilbert space.  By Lemma \ref{le:GM_cloest}, we suppose  $\ket{\phi_4}=(\cos\alpha\ket{0}+e^{i\beta} \sin\alpha \ket{1})^{\otimes 3}$ is a closest product state. 
	\begin{eqnarray}
		\label{eq:GM4_3}
		\Lambda^2(\rho_4)&=\max_{\ket{\phi_4} \in PRO}\bra{\phi_4}\rho_4\ket{\phi_4}
		=\max_{x_i}\frac{1}{128}|\sum_{i=1}^7x_i-11x_8|^2,
	\end{eqnarray}
	where $x_1=\cos^3 \alpha$, $x_2=x_3=x_5=\cos^2\alpha\sin \alpha e^{i\beta}$,
	$x_4=x_6=x_7=\cos\alpha\sin^2 \alpha e^{2i\beta}$, and 
	$x_8=\sin^3 \alpha e^{3i\beta}$.  
	Furthermore,  we have
	\begin{eqnarray}
		\label{eq:GM4_p3}	
		\begin{split}
			\Lambda^2(\rho_4)&=\max_{\ket{\phi_4} \in PRO}\bra{\phi_4}\rho_4\ket{\phi_4}\\
			&=\max_{x_i}\frac{1}{128}(\sum_{i=1}^7x_i-11x_8)^2\\
			&=\max_{a,b}\frac{1}{128}|a+ib|^2\\
			&=\max_{a,b}\frac{1}{128}(a^2+b^2),
		\end{split}
	\end{eqnarray}
	where $a=\cos^3\alpha+3\cos
	^2\alpha\sin \alpha\cos\beta+ 3\cos
	\alpha\sin^2 \alpha\cos2\beta-11\sin^3 \alpha\cos3\beta$ and $b=3\cos
	^2\alpha\sin \alpha\sin\beta+ 3\cos
	\alpha\sin^2 \alpha\sin2\beta-11\sin^3 \alpha\sin3\beta$. Then we calculate the maximum value of the function 
	\begin{eqnarray}
		f_4=\frac{1}{128}(a^2+b^2).
	\end{eqnarray}
	Using Mathematica we get the maximum value of $f_4$ is approximately $0.96$  , when $\alpha\approx 1.64$ and $\beta\approx 0$.  So the GM of $\rho_4$,

	\begin{eqnarray}
		\label{eq:GM4}
		G(\rho_4)=-2\log\Lambda(\rho_4)\approx 0.05.
	\end{eqnarray} 
	\end{proof}
	Then we make a table to show the change of GM in every step in Grover’s algorithm.
	\begin{table}[htb]   
		\begin{center}   
			\caption{The GM of state in every step in Grover's algorithm}  
			\label{tab:GM} 
			\begin{tabular}{|c|c|c|}    
				\hline   \textbf{Quantum state} & \textbf{GM}  \\   
				\hline   $\ket{\psi_{1}}$  &  $0.56$\\
				\hline   $\ket{\psi_{2}}$ & $0.11$\\  
				\hline   $\ket{\psi_{3}}$ & $0.24$\\
				\hline   $\ket{\psi_{4}}$  & $0.05$\\ 
				\hline   
			\end{tabular}   
		\end{center}   
	\end{table}
	In this table, wo conclude that GM of state of the step oracle in Grover's algorithm is getting bigger and that of the step diffuser in Grover's algorithm is getting smaller. The trend of change of GM is similar to discord.
\section{HHL Algorithm}

\label{sec:HHL}
	The HHL algorithm \cite{2009AW} consists of three steps, which are  quantum phase estimation (QPE), R($\lambda^{-1}$)-rotation,
and inverse QPE. These three steps were experimentally and explicitly realized  by selecting a	
linear equation $Ax = b$, where
\begin{eqnarray}
	A=\frac{1}{2}\bma 3 &1\\1&3\ema,  
\end{eqnarray}
\begin{eqnarray}
	b=\bma b_0\\b_1\ema,
\end{eqnarray}
where $b_0^2+b_1^2=1$.

We discuss how efficiently the HHL algorithm utilizes the entanglement as we discussed previously in the Grover’s algorithm. By this reason we will compute
the entanglement at each stage of the HHL algorithm.
The first three-qubit state after the QPE stage is simply
\begin{eqnarray}
	\ket{\psi_1}=\frac{1}{2}[(b_0-b_1)\ket{01}\otimes(\ket{0}-\ket{1})+(b_0+b_1)\ket{10}\otimes(\ket{0}+\ket{1})].
\end{eqnarray}
Then we have
\begin{eqnarray}
	\rho_1=\ketbra{\psi_1}{\psi_1}=\frac{1}{4}\bma
	0&0&0&0&0&0&0&0\\
	0&0&0&0&0&0&0&0\\
	0&0&(b_0-b_1)^2&-(b_0-b_1)^2&b_0^2-b_1^2&b_0^2-b_1^2&0&0\\
	0&0&-(b_0-b_1)^2&(b_0-b_1)^2&b_1^2-b_0^2&b_1^2-b_0^2&0&0\\
	0&0&b_0^2-b_1^2&b_1^2-b_0^2&(b_0+b_1)^2&(b_0+b_1)^2&0&0\\
	0&0&b_0^2-b_1^2&b_1^2-b_0^2&(b_0+b_1)^2&(b_0+b_1)^2&0&0\\
	0&0&0&0&0&0&0&0\\
	0&0&0&0&0&0&0&0\ema	.
\end{eqnarray}
The second three-qubit state's spectral decomposition is 
\begin{eqnarray}
	\label{eq:rho22}
	\rho_2=p\ketbra{\phi_1}{\phi_1}+(1-p)\ketbra{\phi_2}{\phi_2},
\end{eqnarray}
where
\begin{eqnarray}
	\label{eq:rho_2}
	\ket{\phi_1}=\frac{x_1}{\sqrt{2}}(\ket{010}-\ket{011})+\frac{x_2}{\sqrt{2}}(\ket{100}+\ket{101}),\\
	\ket{\phi_2}=-\frac{x_2}{\sqrt{2}}(\ket{010}-\ket{011})+\frac{x_1}{\sqrt{2}}(\ket{100}+\ket{101}),
\end{eqnarray}
\begin{eqnarray}
	p=\frac{1}{2}[1+\sqrt{1-4\beta_1^2\beta_2^2(1-\gamma^2)}], \gamma=\sqrt{(1-C^2)(1-\frac{C^2}{4})}+\frac{C^2}{2},
\end{eqnarray}
\begin{eqnarray}
C=(\sin\frac{\pi}{4}+2\sin\frac{\pi}{8})\approx0.736,
\end{eqnarray}
\begin{eqnarray}
	\beta_1=\frac{1}{\sqrt{2}}(b_0-b_1),\beta_2=\frac{1}{\sqrt{2}}(b_0+b_1),
\end{eqnarray}
and we have
\begin{eqnarray}
	\label{eq:rho2_x}
	x_1=\frac{a_1}{\sqrt{a_1^2+a_2^2}},\qquad x_2=\frac{a_2}{\sqrt{a_1^2+a_2^2}},
\end{eqnarray}
with
\begin{eqnarray}
	a_1=\beta_1[1+\sqrt{1-4\beta_1^2\beta_2^2(1-\g^2)}-2\beta_2^2(1-\g^2)],
\end{eqnarray}
\begin{eqnarray}
	a_2=\beta_2\g[1+\sqrt{1-4\beta_1^2\beta_2^2(1-\g^2)}].
\end{eqnarray}

Then we have
\begin{eqnarray}
	{\large\rho_2
	=\frac{1}{2}
	\bma
	0&0&0&0&0&0&0&0\\
	0&0&0&0&0&0&0&0\\
	0&0&px_1^2+(1-p)x_2^2&-px_1^2-(1-p)x_2^2&(2p-1)x_1x_2&(2p-1)x_1x_2&0&0\\
	0&0&-px_1^2-(1-p)x_2^2&px_1^2+(1-p)x_2^2&(1-2p)x_1x_2&(1-2p)x_1x_2&0&0\\
	0&0&(2p-1)x_1x_2&(1-2p)x_1x_2&(1-p)x_1^2+px_2^2&(1-p)x_1^2+px_2^2&0&0\\
	0&0&(2p-1)x_1x_2&(1-2p)x_1x_2&(1-p)x_1^2+px_2^2&(1-p)x_1^2+px_2^2&0&0\\
	0&0&0&0&0&0&0&0\\
	0&0&0&0&0&0&0&0\ema.	}
\end{eqnarray}

The third three-qubit state's spectral decomposition is 
\begin{eqnarray}
	\label{eq:rho3_y}
	\rho_3=q\ketbra{\varphi_1}{\varphi_1}+(1-q)\ketbra{\varphi_2}{\varphi_2},
\end{eqnarray}
where
\begin{eqnarray}
	q=\frac{1}{2}[1+\sqrt{1-4(AC_2-BC_1)^2}],
\end{eqnarray}
with
\begin{eqnarray}
	A=\frac{1}{2}[(b_0-b_1)\sqrt{1-C^2}+(b_0+b_1)\sqrt{1-\frac{C^2}{4}}],\\
	B=\frac{1}{2}[-(b_0-b_1)\sqrt{1-C^2}+(b_0+b_1)\sqrt{1-\frac{C^2}{4}}],\\
	C_1=C\frac{3b_0-b_1}{4}\qquad C_2=C\frac{-b_0+3b_1}{4}.	
\end{eqnarray}
One can show $A^2 + B^2 + C_1^2 + C_2^2 = b_0^2 + b_1^2 = 1$ explicitly. $\ket{\varphi_1}$ and $\ket{\varphi_2}$ are
\begin{eqnarray}
	\ket{\varphi_1}=\ket{00}\otimes(y_1\ket{0}+y_2\ket{1}), \ket{\varphi_2}=\ket{00}\otimes(-y_2\ket{0}+y_1\ket{1}),
\end{eqnarray}
where
\begin{eqnarray}	
	y_1=\frac{f_1}{\sqrt{f_1^2+f_2^2}},
\end{eqnarray}
\begin{eqnarray}	
 y_2=\frac{f_2}{\sqrt{f_1^2+f_2^2}},
\end{eqnarray}
with
\begin{eqnarray}
	f_1=A^2-B^2+C_1^2+C_2^2+\sqrt{1-4(AC_2-BC_1)^2},\qquad f_2=2(AB+C_1C_2).
\end{eqnarray}
Then we have
\begin{eqnarray}
	\rho_3=
	\bma
	qy_1^2+(1-q)y_2^2&(2q-1)y_1y_2&0&0&0&0&0&0\\
	(2q-1)y_1y_2&(1-q)y_1^2+qy_2^2&0&0&0&0&0&0\\
	0&0&0&0&0&0&0&0\\
	0&0&0&0&0&0&0&0\\
	0&0&0&0&0&0&0&0\\
	0&0&0&0&0&0&0&0\\
	0&0&0&0&0&0&0&0\\
	0&0&0&0&0&0&0&0\ema	.
\end{eqnarray}
\subsection{Geometric measure of three states in HHL algorithm}
Then we investigate the GM of three states in HHL algorithm.
\subsubsection{First state in HHL algorithm}
Firstly, we investigate the first state $\rho_1=\ketbra{\psi_1}{\psi_1}$ in HHL algorithm. 
\begin{lemma}
	\label{le:GM1_HHL}
	Suppose $\ket{\psi_1}=\frac{1}{2}[(b_0-b_1)\ket{01}\otimes(\ket{0}-\ket{1})+(b_0+b_1)\ket{10}\otimes(\ket{0}+\ket{1})]$. For $\rho_1=\ket{\psi_1}\bra{\psi_1}$, we have
	$G(\rho_1)=-2\log(\max \{\frac{|b_0-b_1|^2}{2},\frac{|b_0+b_1|^2}{2}\})$. 
\end{lemma}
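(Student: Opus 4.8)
The plan is to bring $\ket{\psi_1}$ into a standard form and then compute its maximal overlap with product states directly. First I would introduce $\beta_1=\tfrac1{\sqrt2}(b_0-b_1)$ and $\beta_2=\tfrac1{\sqrt2}(b_0+b_1)$ (the same quantities the paper uses later for $\rho_2,\rho_3$), so that $\ket{\psi_1}=\beta_1\,\ket{0}\ket{1}\ket{-}+\beta_2\,\ket{1}\ket{0}\ket{+}$ with $\ket{\pm}=\tfrac1{\sqrt2}(\ket{0}\pm\ket{1})$ and $\beta_1^2+\beta_2^2=b_0^2+b_1^2=1$. The two terms are products of single-qubit states that are mutually orthogonal on each of the three factors ($\braket{0}{1}=\braket{1}{0}=\braket{-}{+}=0$), so a tensor product of three local unitaries carries $\ket{\psi_1}$ to the generalized GHZ state $\beta_1\ket{000}+\beta_2\ket{111}$. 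Since $\Lambda^2$ (hence $G$) is invariant under local unitaries — the set of product states is preserved — it suffices to evaluate $\Lambda^2$ for this GHZ-type state, which also links the answer to the known geometric measure of generalized GHZ states; but since the paper's earlier lemmas work with explicit parametrisations I would carry the computation out on $\ket{\psi_1}$ directly.

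For the upper bound on $\Lambda(\rho_1)=\max_{\ket{\phi}\in PRO}\abs{\braket{\phi}{\psi_1}}$, take an arbitrary product state $\ket{\phi}=\ket{a}\ket{b}\ket{c}$, expand $\ket{a},\ket{b}$ in the computational basis and $\ket{c}$ in the $\{\ket{-},\ket{+}\}$ basis, and set $p=\abs{\braket{a}{0}}^2$, $q=\abs{\braket{b}{1}}^2$, $r=\abs{\braket{c}{-}}^2\in[0,1]$. The triangle inequality then gives $\abs{\braket{\phi}{\psi_1}}\le\abs{\beta_1}\sqrt{pqr}+\abs{\beta_2}\sqrt{(1-p)(1-q)(1-r)}$, and, assuming without loss of generality $\abs{\beta_1}\ge\abs{\beta_2}$, the right-hand side is at most $\abs{\beta_1}\bigl(\sqrt{pqr}+\sqrt{(1-p)(1-q)(1-r)}\bigr)$. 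The decisive point is the inequality $\sqrt{pqr}+\sqrt{(1-p)(1-q)(1-r)}\le1$ on $[0,1]^3$, which I would prove by AM--GM: with $m=(p+q+r)/3$ one gets $\sqrt{pqr}\le m^{3/2}$ and $\sqrt{(1-p)(1-q)(1-r)}\le(1-m)^{3/2}$, and $m^{3/2}+(1-m)^{3/2}\le1$ on $[0,1]$ because this function is convex with its only interior critical point $m=\tfrac12$ a minimum, so its maximum on $[0,1]$ is at the endpoints, where it equals $1$. Hence $\Lambda(\rho_1)\le\max\{\abs{\beta_1},\abs{\beta_2}\}$.

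For the matching lower bound, the product states $\ket{0}\ket{1}\ket{-}$ and $\ket{1}\ket{0}\ket{+}$ have overlaps $\abs{\beta_1}$ and $\abs{\beta_2}$ with $\ket{\psi_1}$, so $\Lambda(\rho_1)\ge\max\{\abs{\beta_1},\abs{\beta_2}\}$. Combining, $\Lambda^2(\rho_1)=\max\{\beta_1^2,\beta_2^2\}=\max\bigl\{\tfrac{\abs{b_0-b_1}^2}{2},\tfrac{\abs{b_0+b_1}^2}{2}\bigr\}$, and the lemma's value of $G(\rho_1)$ follows from $G=-2\log\Lambda$. The step I expect to be the main obstacle is precisely the inequality $\sqrt{pqr}+\sqrt{(1-p)(1-q)(1-r)}\le1$: it is what forces the optimal product state to be a tensor of basis states rather than the spurious symmetric critical point $\tan\alpha=\beta_1/\beta_2$ of the ansatz $(\cos\alpha\ket{0}+\sin\alpha\ket{1})^{\otimes3}$, which gives only the smaller value $\abs{\beta_1\beta_2}$. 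As an alternative closer to the paper's earlier lemmas, one could pass to the GHZ form, use Lemmas~\ref{le:GM_cloest} and~\ref{le:GM_non} to restrict the optimisation to real symmetric product states $(\cos\alpha\ket{0}+\sin\alpha\ket{1})^{\otimes3}$, and then maximise $(\beta_1\cos^3\alpha+\beta_2\sin^3\alpha)^2$ over $\alpha\in[0,\tfrac{\pi}{2}]$, checking via the derivative that the maximum is attained at $\alpha\in\{0,\tfrac{\pi}{2}\}$.
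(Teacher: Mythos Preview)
Your approach is essentially the same as the paper's: both transform $\ket{\psi_1}$ by local unitaries to the generalized GHZ state $\beta_1\ket{000}+\beta_2\ket{111}$ and read off $\Lambda^2=\max\{\beta_1^2,\beta_2^2\}$. The difference is that the paper simply asserts this last equality without justification, whereas you supply a full argument for it --- the triangle-inequality/AM--GM bound $\sqrt{pqr}+\sqrt{(1-p)(1-q)(1-r)}\le 1$ together with the matching product-state witnesses --- which correctly rules out the interior critical point and makes the proof self-contained. One caution on your alternative route: Lemma~\ref{le:GM_cloest} as stated in the paper requires $N>3$, so invoking it for a three-qubit state needs the sharper $N\ge 3$ version from the literature; your primary argument avoids this issue entirely.
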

\begin{proof}
	We  know that $U^\dag \rho_1 U$ has the same GM with $\rho_1$, where $U$ is a local unitary matrix. Then we choose the $U$ such that
	\begin{eqnarray}
		\ket{\psi_1'}=U\ket{\psi_1}=\frac{1}{\sqrt{2}}[(b_0-b_1)\ket{000}+(b_0+b_1)\ket{111}].
	\end{eqnarray}
	Then we have
	\begin{eqnarray}
		\rho_1'=\ketbra{\psi_1'}{\psi_1'}
		=\frac{1}{2}\bma
		(b_0-b_1)^2&0&0&0&0&0&0&b_0^2-b_1^2\\
		0&0&0&0&0&0&0&0\\
		0&0&0&0&0&0&0&0\\
		0&0&0&0&0&0&0&0\\
		0&0&0&0&0&0&0&0\\
		0&0&0&0&0&0&0&0\\
		0&0&0&0&0&0&0&0\\
		b_0^2-b_1^2&0&0&0&0&0&0&(b_0+b_1)^2\ema.	
	\end{eqnarray}
	We recall  geometric measure of entanglement in Definition \ref{de:GM}. Then we have 
	\begin{eqnarray}
		\label{eq:GM_proof1}
		\Lambda^2(\rho_1'):=\max_{\sigma \in SEP} \tr(\rho_1' \sigma_1)=\max_{\ket{\phi_1} \in PRO} \bra{\phi_1}\rho_1'\ket{\phi_1}=\max \{\frac{|b_0-b_1|^2}{2},\frac{|b_0+b_1|^2}{2}\}.
	\end{eqnarray}
	Then
	\begin{eqnarray}
		\label{eq:GM_proof1}
		G(\rho_1)=G(\rho_1')=-2\log(\max \{\frac{|b_0-b_1|^2}{2},\frac{|b_0+b_1|^2}{2}\}).
	\end{eqnarray}
\end{proof}
\subsubsection{Second state in HHL algorithm}
Then we investigate the second state $\rho_2$ in HHL algorithm.

\begin{lemma}
	\label{le:GM2_HHL}
	We recall  $\rho_2=p\ketbra{\phi_1}{\phi_1}+(1-p)\ketbra{\phi_2}{\phi_2}$ in Eq. \eqref{eq:rho22}, we have
	$G(\rho_2)=-2\log(\max \{f(0),f(x')\})$, where $x'$ is the solution of 
	\begin{equation}
		\begin{aligned}
				f'=&\cos	^5\alpha(a\cos	\alpha-6a\sin\alpha+3b(\sec^2\alpha-1)^{\frac{1}{2}}\sec^2\alpha\tan\alpha\\&-6b(\sec^2\alpha-1)^{\frac{3}{2}}\sin\alpha+6c(\sec^2\alpha-1)^2\sec^2\alpha\tan\alpha-6c(\sec^2\alpha-1)^3\sin\alpha).
		\end{aligned}
		\end{equation}
\end{lemma}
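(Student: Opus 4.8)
The plan is to imitate the proof of Lemma~\ref{le:GM1_HHL}: use the local-unitary invariance of $G$ to replace $\rho_2$ by a ``GHZ-type'' state supported on $\lin\{\ket{000},\ket{111}\}$, and then reduce the optimization of Definition~\ref{de:GM} to one real parameter. First I would note that $\ket{\phi_1}$ and $\ket{\phi_2}$ of \eqref{eq:rho_2} both lie in the two-dimensional subspace spanned by $\ket0\ox\ket1\ox\frac1{\sqrt2}(\ket0-\ket1)$ and $\ket1\ox\ket0\ox\frac1{\sqrt2}(\ket0+\ket1)$. Choosing the product unitary $U=I\ox\sigma_1\ox U_3$, where $U_3$ sends $\frac1{\sqrt2}(\ket0-\ket1)\mapsto\ket0$ and $\frac1{\sqrt2}(\ket0+\ket1)\mapsto\ket1$, gives $U\ket{\phi_1}=x_1\ket{000}+x_2\ket{111}$ and $U\ket{\phi_2}=-x_2\ket{000}+x_1\ket{111}$, so $\rho_2':=U\rho_2U^\dagger$ has in the ordered basis $\{\ket{000},\ket{111}\}$ the single nonzero $2\times 2$ block $\bma a&d\\ d&c\ema$ with $a=px_1^2+(1-p)x_2^2$, $c=(1-p)x_1^2+px_2^2$, $d=(2p-1)x_1x_2$, and $G(\rho_2)=G(\rho_2')$. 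Conjugating by $\sigma_1^{\ox 3}$ if necessary we may assume $a\ge c$, and by $\sigma_3$ on one qubit if necessary we may assume $d\ge0$; put $b=2d$.

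Next I would reduce $\Lambda^2(\rho_2')=\max_{\ket{\phi}\in PRO}\bra{\phi}\rho_2'\ket{\phi}$ to a single variable. Only the amplitudes $u$ on $\ket{000}$ and $v$ on $\ket{111}$ of a product state $\ket{\phi}=\bigox_{j=1}^3(\cos\alpha_j\ket0+e^{i\beta_j}\sin\alpha_j\ket1)$, $\alpha_j\in[0,\pi/2]$, enter the overlap; since $a,b,c\ge 0$ we may choose the phases so that $\bar{u}v\ge0$, i.e. take all $\beta_j=0$, and then the objective is $u^2(a+b\tau+c\tau^2)$ with $\tau=v/u=\prod_j\tan\alpha_j$. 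For fixed $\tau$ this is increasing in $u^2=\prod_j(1+\tan^2\alpha_j)^{-1}$, which by a Lagrange-multiplier argument is maximal exactly when all $\tan\alpha_j$ are equal; hence the optimal product state is symmetric, $\ket{\phi}=(\cos\alpha\ket0+\sin\alpha\ket1)^{\ox 3}$ (in agreement with Lemmas~\ref{le:GM_cloest} and \ref{le:GM_non}). Thus
\begin{equation}
	\Lambda^2(\rho_2')=\max_{\alpha\in[0,\pi/2]}\left(a\cos^6\alpha+b\cos^3\alpha\sin^3\alpha+c\sin^6\alpha\right),
\end{equation}
and, writing $f(\alpha)$ for the square root of the bracketed function, $\Lambda(\rho_2)=\Lambda(\rho_2')=\max_\alpha f(\alpha)$.

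Finally, $f$ is smooth on the compact interval $[0,\pi/2]$, so its maximum is attained at an endpoint or at an interior stationary point. Differentiating and substituting $t=\tan\alpha$ (so $\sec^2\alpha-1=t^2$), the condition $f'(\alpha)=0$ reduces, after clearing common factors, to the algebraic equation in $\tan\alpha$ displayed in the statement; let $x'$ be the interior solution at which $f$ is largest. Because we arranged $a\ge c$ we have $f(0)=\sqrt a\ge\sqrt c=f(\pi/2)$, whence $\max_\alpha f(\alpha)=\max\{f(0),f(x')\}$ and $G(\rho_2)=-2\log\Lambda(\rho_2)=-2\log\big(\max\{f(0),f(x')\}\big)$.

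The hardest part is this last maximization: the coefficients $a,b,c$ are nested functions of the HHL constants $p,x_1,x_2$, hence of $b_0,b_1,C,\gamma$, so the stationarity equation has no closed-form root and $x'$ can be pinned down only implicitly (or numerically for prescribed $b_0,b_1$, as done elsewhere in the paper); one also has to confirm that the relevant interior stationary point is a maximum rather than a minimum, which I would do by a sign analysis of $f'$. A secondary subtlety is that Lemma~\ref{le:GM_cloest} is stated only for $N>3$, so for the present $N=3$ case the symmetry of the closest product state should rest on the direct argument given above rather than be quoted from that lemma.
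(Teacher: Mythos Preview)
Your argument is correct and follows the same route as the paper's: conjugate by a product unitary to a state supported on $\lin\{\ket{000},\ket{111}\}$, restrict to symmetric real product states, and locate the maximum of $a\cos^6\alpha+b\cos^3\alpha\sin^3\alpha+c\sin^6\alpha$ among the endpoints and interior stationary points of $f$. Your version is in fact tidier in several respects---a single explicit local unitary in place of the paper's separate $U_1,U_2$ acting on the two eigenvectors, a self-contained Lagrange-multiplier argument for the symmetry of the optimal product state rather than an appeal to Lemma~\ref{le:GM_cloest} (which, as you rightly note, is stated only for $N>3$), and the WLOG normalizations $a\ge c$, $d\ge0$ that dispose of the phase $\beta$ and the second endpoint $\alpha=\pi/2$---but the underlying method is identical.
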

\begin{proof}
	For  $\ket{\phi_1}$ and $\ket{\phi_2}$ in Eq. \eqref{eq:rho_2}. Let unitary matrices $U_1=I\otimes I\otimes V_1$ and $U_2=I\otimes I\otimes V_2$ such that 
	\begin{eqnarray}
		\ket{\phi_1'}=U_1\ket{\phi_1}=\frac{x_1}{\sqrt{2}}\ket{000}+\frac{x_2}{\sqrt{2}}\ket{111},
	\end{eqnarray}
	\begin{eqnarray}
		\ket{\phi_2'}=U_2\ket{\phi_2}=-\frac{x_2}{\sqrt{2}}\ket{000}+\frac{x_1}{\sqrt{2}}\ket{111}.
	\end{eqnarray}
Then we have
\begin{eqnarray}
	\rho_2'
	=\frac{1}{2}
	\bma
	px_1^2+(1-p)x_2^2&0&0&0&0&0&0&(2p-1)x_1x_2\\
	0&0&0&0&0&0&0&0\\
	0&0&0&0&0&0&0&0\\
	0&0&0&0&0&0&0&0\\
	0&0&0&0&0&0&0&0\\
	0&0&0&0&0&0&0&0\\
	0&0&0&0&0&0&0&0\\
	(2p-1)x_1x_2&0&0&0&0&0&0&(1-p)x_1^2+px_2^2\ema.	
\end{eqnarray}
We recall  geometric measure of entanglement in Definition \ref{de:GM}. Then we have 
\begin{eqnarray}
	\label{eq:GM_proof2}
	\Lambda^2(\rho_2)=	\Lambda^2(\rho_2'):=\max_{\sigma \in SEP} \tr(\rho_2' \sigma_2)=\max_{\ket{\phi_2} \in PRO} \bra{\phi_2}\rho_2'\ket{\phi_2}.
\end{eqnarray}
	
	We know that $\ket{\psi_2}$ is a  fully pure product state in the Hilbert space. By Lemma \ref{le:GM_cloest}, we obtain that $\ket{\psi_2}$  is a symmetric state. Suppose  $\ket{\psi_2}=(\cos\alpha\ket{0}+e^{i\beta} \sin\alpha \ket{1})^{\otimes 3}$ is a closest product state.
Then we have that
\begin{equation}\label{eq:GM1_p1}
	\begin{aligned}
		\ket{\psi_2}=&\cos^3 \alpha\ket{000}+\cos^2\alpha\sin \alpha e^{i\beta}(\ket{001}+\ket{010}+\ket{100})\\&+ 
		\cos\alpha\sin^2 \alpha e^{2i\beta}(\ket{011}+\ket{101}+\ket{110})+\sin^3 \alpha e^{3i\beta}\ket{111},
	\end{aligned}
\end{equation}
\begin{equation}
	\label{eq:GM1_p2}
	\begin{aligned}
		\Lambda^2(\rho_2')=&\max_{\ket{\psi_2} \in PRO}\bra{\psi_2}\rho_2'\ket{\psi_2}\\
		=&\max\{\frac{1}{2}((px_1^2+(1-p)x_2^2)\cos^6\alpha+((1-p)x_1^2+px_2^2)\sin^6\alpha \\&+(4p-2)x_1x_2\cos^3\alpha\sin^3\alpha (e^{3i\beta}+e^{-3i\beta})) \}
		\\ \leq&\max\{\frac{1}{2}((px_1^2+(1-p)x_2^2)\cos^6\alpha+((1-p)x_1^2+px_2^2)\sin^6\alpha \\&+\lvert(8p-4)x_1x_2\cos^3\alpha\sin^3\alpha \rvert \},
	\end{aligned}
\end{equation}
Then we will transform this problem to find the maximum value of $f$, i.e.,
\begin{eqnarray}
	f=a\cos^6\alpha +b\cos^3\alpha\sin^3\alpha +c\sin^6\alpha,
\end{eqnarray}
where
\begin{eqnarray}
	a=\frac{1}{2}(px_1^2+(1-p)x_2^2),
\end{eqnarray}
\begin{eqnarray}
	b=\lvert(4p-2)x_1x_2 \rvert,
\end{eqnarray}
\begin{eqnarray}
	c=\frac{1}{2}((1-p)x_1^2+px_2^2).
\end{eqnarray}
Then we have 
\begin{eqnarray}
	f=\cos^6\alpha(a+b(\sec^2\alpha-1)^{\frac{3}{2}}+c(\sec^2\alpha-1)^{3}).
\end{eqnarray}
Then
\begin{equation}
\begin{aligned}	
	f'=&\cos	^5\alpha(a\cos	\alpha-6a\sin\alpha+3b(\sec^2\alpha-1)^{\frac{1}{2}}\sec^2\alpha\tan\alpha\\&-6b(\sec^2\alpha-1)^{\frac{3}{2}}\sin\alpha+6c(\sec^2\alpha-1)^2\sec^2\alpha\tan\alpha-6c(\sec^2\alpha-1)^3\sin\alpha).
\end{aligned}	
\end{equation}
Suppose $x'$ is the solution of $f'=0$.
Then we have
\begin{eqnarray}
	\label{eq:GM_proof2}
	G(\rho_2)=G(\rho_2')=-2\log(\max \{f(0),f(x')\}).
\end{eqnarray}
\end{proof}

\subsubsection{Third state in HHL algorithm}
Then we investigate the third state $\rho_3$ in HHL algorithm.
 \begin{lemma}
 	\label{le:GM3_HHL}
 	For $\rho_3=q\ketbra{\varphi_1}{\varphi_1}+(1-q)\ketbra{\varphi_2}{\varphi_2}$  in Eq. \eqref{eq:rho3_y}, we obtain that
 	$G(\rho_3)=-2log(\max \{q,1-q\})$.
 \end{lemma}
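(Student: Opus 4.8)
The plan is to exploit the especially simple structure of $\rho_3$. Since $\ket{\varphi_1}$ and $\ket{\varphi_2}$ in \eqref{eq:rho3_y} both carry the first two qubits in the fixed state $\ket{00}$, we may write $\rho_3=\ketbra{0}{0}\otimes\ketbra{0}{0}\otimes\sigma$, where $\sigma=q\ketbra{v_1}{v_1}+(1-q)\ketbra{v_2}{v_2}$ is a single--qubit state with $\ket{v_1}=y_1\ket{0}+y_2\ket{1}$ and $\ket{v_2}=-y_2\ket{0}+y_1\ket{1}$. Because $y_1^2+y_2^2=1$ (immediate from the definitions of $y_1,y_2$), the vectors $\ket{v_1},\ket{v_2}$ are orthonormal, so $\sigma$ has eigenvalues $q$ and $1-q$; equivalently, a local unitary $U_3=I\otimes I\otimes V$ on the third qubit maps $\rho_3$ to $\rho_3'=\ketbra{0}{0}\otimes\ketbra{0}{0}\otimes\diag(\max\{q,1-q\},\min\{q,1-q\})$, in the same spirit as the opening moves of the proofs of Lemmas \ref{le:GM1_HHL} and \ref{le:GM2_HHL}. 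Since GM is invariant under local unitaries (the observation already used at the start of the proof of Lemma \ref{le:GM1_HHL}), it suffices to compute $\Lambda^2(\rho_3')$.

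Then I would evaluate $\Lambda^2$ straight from Definition \ref{de:GM}. For any pure product state $\ket{\phi}=\ket{a}\otimes\ket{b}\otimes\ket{c}\in PRO$,
\[
\bra{\phi}\rho_3\ket{\phi}=\abs{\braket{a}{0}}^2\,\abs{\braket{b}{0}}^2\,\bra{c}\sigma\ket{c}\le\bra{c}\sigma\ket{c}\le\max\{q,1-q\},
\]
the last step because the maximum of the Rayleigh quotient $\bra{c}\sigma\ket{c}$ over unit vectors is the largest eigenvalue of $\sigma$. Equality is achieved by $\ket{a}=\ket{b}=\ket{0}$ together with $\ket{c}=\ket{v_1}$ when $q\ge\frac12$ and $\ket{c}=\ket{v_2}$ otherwise, which is a genuine pure product state. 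Hence $\Lambda^2(\rho_3)=\Lambda^2(\rho_3')=\max\{q,1-q\}$, and applying \eqref{eq:GM2} exactly as in Lemma \ref{le:GM1_HHL} gives $G(\rho_3)=G(\rho_3')=-2\log(\max\{q,1-q\})$.

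I do not expect a genuinely hard step here: unlike Lemmas \ref{le:GM1}--\ref{le:GM4} and Lemma \ref{le:GM2_HHL}, no optimization over a symmetric--ansatz angle $\alpha$ is required, because the two frozen qubits collapse the product--state maximization to a one--qubit eigenvalue problem. The only points that call for a little care are (i) confirming that $\ket{v_1}$ and $\ket{v_2}$ are truly orthonormal, so that $q$ and $1-q$ are indeed the spectrum of $\sigma$ -- which follows at once from $y_1^2+y_2^2=1$ -- and (ii) invoking the local--unitary invariance of GM; if one prefers to avoid the change of basis $U_3$ altogether, one can simply bound $\bra{c}\sigma\ket{c}$ by the top eigenvalue of $\sigma$ in its original, non--diagonal form, since the Rayleigh--quotient bound does not require $\sigma$ to be diagonal. (One may also note, though it is not needed, that $A^2+B^2+C_1^2+C_2^2=1$ forces $\abs{AC_2-BC_1}\le\frac12$, hence $q\ge\frac12$ and $\max\{q,1-q\}=q$.)
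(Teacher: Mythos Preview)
Your proposal is correct and follows essentially the same approach as the paper: both exploit the fact that the first two qubits of $\rho_3$ are frozen in $\ket{00}$, reducing the product-state optimization to the largest eigenvalue $\max\{q,1-q\}$ of the residual single-qubit state. The paper's own proof merely asserts $\Lambda^2(\rho_3)=\max\{q,1-q\}$ and applies Definition~\ref{de:GM}, whereas you supply the supporting details (the Rayleigh-quotient bound and the attaining product state) that the paper leaves implicit.
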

\begin{proof}
We know that $\rho_3=q\ketbra{\varphi_1}{\varphi_1}+(1-q)\ketbra{\varphi_2}{\varphi_2}$.
We recall geometric measure of entanglement in Definition \ref{de:GM}. Then we have 
\begin{eqnarray}
	\label{eq:GM_proof3}
	\Lambda^2(\rho_3):=\max_{\sigma \in SEP} \tr(\rho_3 \sigma_2)=
	\max \{q,1-q\}.
\end{eqnarray}
Then
\begin{eqnarray}
	\label{eq:GM_proof3}
	G(\rho_3)=-2\log(\max \{q,1-q\}).
\end{eqnarray}
\end{proof}
\section{conclusion}
\label{sec:con}
We have investigated the coherence, discord and GM of quantum states in the steps of Grover's  Algorithm respectively. Then we show the tables of  coherence, discord and GM in Tables \ref{tab:coh}, \ref{tab:dis} and \ref{tab:GM} respectively. We also conclude that the variation trends of these physical quantitie are getting smaller in the step oracle and are getting bigger or invariant at the step diffuser. Then we investigate GM of quantum states in three steps of HHL Algorithm in Lemmas \ref{le:GM1_HHL}, \ref{le:GM2_HHL} and \ref{le:GM3_HHL}, respectively.

These results help investigate the Grover's Algorithm and HHL Algorithm. In the future we plan to investigate more physical quantities about quantum states in Grover's  Algorithm and HHL Algorithm.
\section*{Acknowledgements}

LC was supported by the NNSF of China (Grant No. 11871089). LJZ was supported by the  NNSF of China (Grant No. 12101031), and the Fundamental Research Funds for the Central Universities (Grant Nos. KG12080401 and ZG216S1902).
\bibliographystyle{unsrt}
\bibliography{changchun}
\end{document}